\pgfplotsset{compat=newest}
\newtheorem{theorem}{Theorem}
\newtheorem{lemma}{Lemma}
\newtheorem{question}{Question}
\newtcolorbox[auto counter]{mybox}[2][]{
	enhanced,
	breakable,
	colback=blue!5!white,
	colframe=blue!75!black,
	fonttitle=\bfseries,
	title=Box \thetcbcounter: #2,#1
}
\begin{document}

\title{Enhanced Analysis for the Decoy-State Method}

\author{Zitai Xu}
\affiliation{Center for Quantum Information, Institute for Interdisciplinary Information Sciences, Tsinghua University, Beijing 100084, China}
\author{Yizhi Huang}
\affiliation{Center for Quantum Information, Institute for Interdisciplinary Information Sciences, Tsinghua University, Beijing 100084, China}
\author{Xiongfeng Ma}
\email{xma@tsinghua.edu.cn}
\affiliation{Center for Quantum Information, Institute for Interdisciplinary Information Sciences, Tsinghua University, Beijing 100084, China}

\begin{abstract}
Quantum key distribution is a cornerstone of quantum cryptography, enabling secure communication through the principles of quantum mechanics. In reality, most practical implementations rely on the decoy-state method to ensure security against photon-number-splitting attacks. A significant challenge in realistic quantum cryptosystems arises from statistical fluctuations with finite data sizes, which complicate the key-rate estimation due to the nonlinear dependence on the phase error rate. In this study, we first revisit and improve the key rate bound for the decoy-state method. We then propose an enhanced framework for statistical fluctuation analysis. By employing our fluctuation analysis on the improved bound, we demonstrate enhancement in key generation rates through numerical simulations with typical experimental parameters. Furthermore, our approach to fluctuation analysis is not only applicable in quantum cryptography but can also be adapted to other quantum information processing tasks, particularly when the objective and experimental variables exhibit a linear relationship.
\end{abstract}

\maketitle 
\tableofcontents
\clearpage

\section{Introduction} \label{Sec:Intro}
Quantum key distribution (QKD) is among the most practical technologies in quantum information science, allowing distant parties to establish secure keys based on the fundamental principles of quantum mechanics~\cite{bennett1984quantum, ekert1991Quantum}. With its foundation in information-theoretic security~\cite{lo1999Unconditional}, QKD ensures that any eavesdropping attempt is detectable, offering unprecedented security for applications such as distributed quantum computing and secure cloud services. The practicality of QKD has been validated through various large-scale global deployments of quantum networks~\cite{peev2009secoqc, sasaki2011field, Tang2016MDInet, Chen2021implementation, chen2021integrated}, which highlight the potential and feasibility of this security technology.

Since the original BB84 protocol~\cite{bennett1984quantum}, significant progress has been made in both theoretical and experimental aspects of QKD~\cite{Xu2020Secure}. These efforts have led to the development of sophisticated protocols such as measurement-device-independent QKD~\cite{lo2012Measurement}, twin-field QKD~\cite{lucamarini2018overcoming}, and mode-pairing QKD~\cite{zeng2022mode}, each pushing the boundaries of secure quantum communication.

Despite differences in encoding schemes and key extraction mechanisms, most practical QKD implementations rely on the decoy-state method~\cite{hwang2003decoy, Lo2005Decoy, wang2005decoy}. This method tackles the challenge of producing high-quality single-photon sources by providing a secure way to estimate key rates, even in the presence of multi-photon emissions that are vulnerable to photon-number-splitting attacks~\cite{Brassard2000}. By mitigating this vulnerability, the decoy-state method ensures security and improves key generation rates in real-world scenarios, where practical photon sources emit both single-photon and multi-photon states.

A key innovation of the decoy-state method lies in its ability to leverage statistical data from different quantum states to estimate the single-photon component and its associated phase error rate. By incorporating the Gottesman-Lo-Lütkenhaus-Preskill security analysis~\cite{gottesman2004security}, a secure key rate can be derived. However, in practical applications, statistical fluctuations due to finite data sizes significantly affect the final key rate~\cite{renner2008security}. These fluctuations pose a major challenge, especially given the large number of variables involved in the decoy-state method. Furthermore, the nonlinear dependence of the key rate on the phase error rate adds complexity to the analysis. Addressing these fluctuation issues is crucial to enhancing key generation performance and reducing the number of experimental rounds, thus making QKD protocols more practical, particularly for long-distance communication. 

While previous studies~\cite{Ma2005practical,marco2012tight,hayashi2012concise, Lim2014, Zhang2017improved} have made progress in addressing these challenges, primarily by separately estimating bounds for the single-photon component and phase error rate, the nonlinear nature of the Shannon entropy function in the key rate formula still poses a significant limitation. This suggests that further refinements in the finite-key analysis of the decoy-state method are necessary.

In this work, we tackle the challenge of statistical fluctuations by introducing a linear relaxation of the nonlinear dependence of the key rate on the single-photon phase error rate. Although this linearization might slightly reduce the final key rate, we show that, with careful selection of the slope and intercept, the impact is minimal. Moreover, due to the convex nature of the Shannon entropy function, an appropriately chosen tangent line can provide an effective linear bound. This approach offers two key benefits: it simplifies the theoretical analysis, allowing us to derive tighter key rate bounds in the absence of statistical fluctuations, and it facilitates more precise and efficient fluctuation analysis. Additionally, through this linear relaxation, we identified the optimal eavesdropping strategy and incorporated it to further refine our bounds.

Building on this, we present a joint statistical fluctuation analysis framework that is applicable when variables and objectives exhibit linear relationships. Specifically, in the context of the decoy-state method, we extend the previous approach~\cite{Zhang2017improved} by conducting the analysis after key-sifting, where the photon-number component statistics are fixed but unknown. By modeling the distribution of valid detection events across different decoy states using Bernoulli variables, we integrate this into our fluctuation analysis framework. Leveraging the linear bounds, we apply the Chernoff bound to achieve tighter and more concise results.

The remainder of this paper is organized as follows. In Sec.~\ref{Sec:Preliminary}, we review the decoy-state method and the Chernoff bound. In Sec.~\ref{Sec:Linear}, we analyze the eavesdropper's optimal attacking strategy and derive a linear lower bound for the key rate in the absence of statistical fluctuations. In Sec.~\ref{Sec:Joint}, we present a fluctuation analysis framework by jointly evaluating the variables using the Chernoff bound, followed by combining the linear key-rate bound to obtain the final key rate. All relevant formulas for experimental data post-processing are provided in subsection~\ref{subsec:exp}. In Sec.~\ref{Sec:Simulation}, we present simulation results, and finally, we conclude in Sec.~\ref{sc:conclusion} with a discussion and outlook.

\section{Preliminary} \label{Sec:Preliminary}

\subsection{Decoy-state method} \label{sub:decoy}
The core idea of the decoy-state method is that Alice and Bob encode their quantum information into optical pulses with different intensities, enabling them to monitor the channel performance on different photon numbers in real time. For simplicity, we focus primarily on the one-decoy state scheme \cite{Ma2005practical}, as presented in Box \ref{box:decoy}. The methodology can be extended to other schemes involving multiple decoy states in a similar manner.

\begin{mybox}[label={box:decoy}]{One-decoy-state method}
	\begin{enumerate}[(1)]
		\item 
		State preparation: In the $i$-th round, Alice prepares a coherent-state pulse $\ket{\alpha_i} = \ket{\sqrt{\mu_i}e^{\mathrm{i} \theta_i}}$, where $\mu_i \in \{\mu, \nu\}$ represents the intensity selected with probabilities $p_\mu$ and $p_\nu$, respectively, and $\theta_i \in [0,2\pi)$ is the random phase. Alice encodes it into four BB84 quantum states, $\ket{0},\ket{1},\ket{+},\ket{-}$, for instance, using polarization.
				
		\item
		Measurement: Alice sends the optical pulse to Bob, who measures it randomly in either the $X$ or $Z$ basis.
		
		\item 
		They repeat the above steps for $N$ rounds. Alice sends out $N_\mu, N_\nu$ rounds for the intensities $\mu,\nu$, respectively, where $N_\mu\approx p_\mu N$, $N_\nu\approx p_\nu N$, and $N_\mu+N_\nu=N$. Correspondingly, Bob receives $n_\mu$ and $n_\nu$ valid detection clicks for the intensities $\mu$ and $\nu$, respectively.
					
		\item 
		Announcement: Bob announces the rounds where he obtains valid clicks, along with his basis choices. Alice then announces her selected intensities ($\mu$ or $\nu$) and bases. They also announce raw bits corresponding to the decoy states, where $\mu_i \neq \mu$. 								
				
		\item 
		Key mapping: They keep the bits from the intensity $\mu$ when they choose the same basis as the raw key.
		
		\item 
		Parameter estimation: Alice and Bob use the decoy-state method to estimate the fraction of the raw key derived from single-photon states and calculate the associated error rate.		

		\item 
		Post-processing: Alice and Bob reconcile their raw key via a public classical channel, followed by privacy amplification to generate the final secure key.
	\end{enumerate}
\end{mybox}

Several notes regarding the scheme are as follows. First, the approximate equations in Box~\ref{box:decoy} stem from statistical fluctuations, which become exact as $N \rightarrow \infty$. Second, while we assume Alice and Bob announce all information from the decoy states for simplicity, they could also generate secure key bits from decoy states. Third, Alice and Bob can first perform information reconciliation to determine the exact bit error rate before estimating the phase error rate~\cite{Fung2010Finite}, allowing them to precisely count bit errors without statistical fluctuations affecting the results. Lastly, they can use pre-shared secure keys to encrypt communication during the information reconciliation process and then ``return" those secure keys after the final key generation, ensuring the net key rate remains unchanged~\cite{du2024advantage}.

Here, we assume that Alice and Bob use phase-randomized coherent state sources~\cite{Lo2005Decoy}, which can be expressed as
\begin{equation} \label{eq:photonnumber}
	\frac{1}{2\pi}\int_{0}^{2\pi} \ketbra{\sqrt{\mu}e^{\mathrm{i} \theta}}d\theta = \sum_{k=0}^{\infty} \frac{\mu^k e^{-\mu}}{k!} \ketbra{k},
\end{equation}
where the probabilities of different Fock states follow a Poisson distribution. In this setting, we can think of there being distinct channels between Alice and Bob, each corresponding to different photon numbers, labeled by $k$. In principle, Alice and Bob can independently analyze detection probabilities and error rates for each of these channels. Note that this analysis can be modified for other types of photon sources, such as thermal states~\cite{Ma2008PhD}, to suit different experimental setups. In practice, phase randomization can be approximated effectively using discrete phases~\cite{Cao2015discrete}.

Due to the risk of photon-number-splitting (PNS) attacks, the multi-photon components ($k \ge 2$) do not contribute to a positive key rate in the standard security analysis~\cite{gottesman2004security}. Define the gain of a $k$-photon component, $Q_k$, as the probability that Alice sends out a $k$-photon state and Bob successfully obtains a valid detection click. The corresponding error rate for the $k$-photon component is denoted by $e_k$. The yield of the $k$-photon state, $Y_k$, is defined as the conditional probability that Bob obtains a valid click given that Alice sends out a $k$-photon state. Given that the source follows a Poisson distribution, we have:
\begin{equation}
	Q_k = \frac{\mu^k e^{-\mu}}{k!} Y_k.
\end{equation}

Define the total gain, $Q_{\mu}$, for a signal state as the probability that Alice sends out a pulse of intensity $\mu$ and Bob obtains a valid click. The corresponding total error rate is denoted by $E_\mu$. Therefore, the following relations hold:
\begin{equation} \label{eq:QEeqs}
	\begin{split}
		Q_\mu &= \sum_{k=0}^{\infty} \frac{\mu^k e^{-\mu}}{k!} Y_k, \\
		Q_\mu E_\mu &= \sum_{k=0}^{\infty} \frac{\mu^k e^{-\mu}}{k!} Y_k e_k.
	\end{split}
\end{equation}
Here, $\mu$ is a known value chosen by Alice, while $Y_k$ and $e_k$ are unknown and potentially controlled by an eavesdropper, Eve. Thus, it is crucial for Alice and Bob to monitor $Y_k$ and $e_k$ in real time to ensure the security of key distribution.

\subsection{Key Rate}
In QKD, the key rate can be defined in different ways. One definition, denoted by $r$, represents the ratio of the final secure key bits to the raw key bits. Specifically, if Alice and Bob generate $n_{\mu}$ raw bits in the signal states when they select the same basis, the final number of secure key bits should be $r n_{\mu}$. In an error correction scheme without encryption, the privacy amplification process should use a hashing matrix of dimensions $r n_{\mu} \times n_{\mu}$. The key rate $r$ is given by~\cite{Lo2005Decoy, gottesman2004security}:
\begin{equation}\label{Eq:KeyRate_small_r}
	r = \frac{Q_1}{Q_{\mu}}\left[1-h\left(e_1\right)\right] - I_{ec},
\end{equation}
where $I_{ec} = f h(E_{\mu})$ represents the cost for information reconciliation, $h(x) = -x\log(x) - (1-x)\log(1-x)$ is the binary entropy function, and $f$ is the error correction efficiency.

Another definition, denoted by $R$, represents the ratio of secure key bits to the total number of laser pulses sent, also known as the key rate per emitted pulse. In this case, the key rate formula is:
\begin{equation}\label{Eq:KeyRate}
	\begin{split}
		R &= p_{\mu}\left\{ Q_1[1-h(e_1)] - I_{ec} Q_{\mu} \right\} \\
		&= p_{\mu} r Q_{\mu},
	\end{split}
\end{equation}
where $p_{\mu}$ is the probability that Alice chooses the signal state. This key rate means that if Alice sends a total of $N$ pulses --- including both signal and decoy states --- the final secure key will contain $RN$ bits. Since the total number of pulses $N$ reflects the running time of the experiment, $R$ is more commonly used for comparing the efficiency of different QKD protocols.

In an experiment, $Q_\mu$ can be directly obtained, and $I_{ec}$ can be determined after information reconciliation. The central idea of the decoy-state method is to estimate $Q_1$ and $e_1$ based on statistics from different signal and decoy states. Since $Q_1 = \mu e^{-\mu} Y_1$, our main focus is on bounding the privacy amplification term,
\begin{equation}
	Y=Y_1\left[1-h\left(e_1\right)\right],
\end{equation}
using all the information available to Alice and Bob.

Note that, in general, $Y_k$ may differ for the $X$ and $Z$ bases when $k \geq 3$, as Eve can potentially obtain information about the basis for multi-photon components. Therefore, in a more general analysis, we would need to define two distinct sets of variables: $\{Y_k^X, e_k^X\}$ and $\{Y_k^Z, e_k^Z\}$. However, for simplicity, we consider a symmetric QKD scheme in which the probabilities of selecting the $X$ and $Z$ bases are equal. We define $Y_k$ and $e_k$ as the yields and bit error rates averaged across both the $X$ and $Z$ bases. By doing so, the overall key rate will be lower than it would be if we analyze the $X$ and $Z$ bases separately, which can be seen from the following inequality:
\begin{equation}
	\begin{split}
		\hat{Y}_1^X[1 - h(\hat{e}_1^Z)] + \hat{Y}_1^Z[1 - h(\hat{e}_1^X)] &\geq (\hat{Y}_1^X + \hat{Y}_1^Z) \left[ 1 - h \left( \frac{\hat{Y}_1^X \hat{e}_1^Z + \hat{Y}_1^Z \hat{e}_1^X}{\hat{Y}_1^X + \hat{Y}_1^Z} \right) \right] \\
		&= \hat{Y}_1[1 - h(\hat{e}_1)],
	\end{split}
\end{equation}
where the hat symbol in $\hat{Y}$ and $\hat{e}$ indicates that these are estimators rather than true values. The inequality holds because the binary entropy function $h(x)$ is convex.

\subsection{Chernoff Bound}\label{subsec:Chernoff}
In the analysis of the decoy-state method, we often need to work with probabilities, whereas in practical experiments, we only have access to observed frequencies. Statistical fluctuations can lead to deviations between the underlying probabilities and the observed frequencies. To account for such deviations, we employ the Chernoff bound~\cite{curty2014finite, Lim2014, Zhang2017improved}, which provides a way to quantify the impact of statistical fluctuations.

\begin{lemma}[Chernoff Bound]\label{lemma:chernoff}
	Suppose $X_1, \dots, X_n$ are independent discrete random variables taking values in $[0, 1]$. Let $X$ denote the sum of these random variables. Then $\forall \delta > 0$,
	\begin{equation}\label{eq:chernoff_upper}
		\Pr(X \geq (1 + \delta)\mathbb{E}[X]) \leq \left(\frac{e^{\delta}}{(1 + \delta)^{1 + \delta}}\right)^{\mathbb{E}[X]},
	\end{equation}
	and for $0 < \delta < 1$, 
	\begin{equation}\label{eq:chernoff_lower}
		\Pr(X \leq (1 - \delta)\mathbb{E}[X]) \leq \left(\frac{e^{-\delta}}{(1 - \delta)^{1 - \delta}}\right)^{\mathbb{E}[X]}.
	\end{equation}
\end{lemma}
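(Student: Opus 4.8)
The plan is to prove both tail bounds by the standard exponential-moment (Chernoff) method: apply Markov's inequality to the random variable $e^{\pm tX}$ for a free parameter $t>0$, use independence to factorize the moment-generating function, bound each factor by a convexity argument, and finally optimize over $t$. I would establish the upper tail \eqref{eq:chernoff_upper} first and then obtain the lower tail \eqref{eq:chernoff_lower} by a symmetric argument with $-X$ in place of $X$.

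For the upper tail, fix $t>0$ and write $\Pr(X \geq (1+\delta)\mathbb{E}[X]) = \Pr(e^{tX} \geq e^{t(1+\delta)\mathbb{E}[X]})$, which by Markov's inequality is at most $e^{-t(1+\delta)\mathbb{E}[X]}\,\mathbb{E}[e^{tX}]$. Since the $X_i$ are independent, $\mathbb{E}[e^{tX}] = \prod_{i=1}^n \mathbb{E}[e^{tX_i}]$, so the task reduces to bounding each factor. The key inequality is that on $[0,1]$ the convex function $x \mapsto e^{tx}$ lies below the chord joining its endpoints, i.e. $e^{tx} \leq 1 + x(e^t-1)$ for $x \in [0,1]$; this is precisely where the hypothesis $X_i \in [0,1]$ is used. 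Taking expectations and applying $1+y \leq e^y$ gives $\mathbb{E}[e^{tX_i}] \leq \exp\!\left(\mathbb{E}[X_i](e^t-1)\right)$, hence $\mathbb{E}[e^{tX}] \leq \exp\!\left((e^t-1)\mathbb{E}[X]\right)$. Collecting terms, the tail probability is bounded by $\exp\!\left(\mathbb{E}[X]\left[(e^t-1)-t(1+\delta)\right]\right)$.

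The final step is to minimize the exponent over $t>0$. Differentiating $(e^t-1)-t(1+\delta)$ and setting the derivative to zero yields $e^t = 1+\delta$, i.e. $t = \ln(1+\delta)$, which is positive precisely because $\delta>0$; substituting back produces the exponent $\delta - (1+\delta)\ln(1+\delta)$ and thus the claimed bound $\left(e^{\delta}/(1+\delta)^{1+\delta}\right)^{\mathbb{E}[X]}$. The lower tail is handled identically using $e^{-tX}$: the same chord bound gives $\mathbb{E}[e^{-tX}] \leq \exp\!\left((e^{-t}-1)\mathbb{E}[X]\right)$, the tail is bounded by $\exp\!\left(\mathbb{E}[X]\left[(e^{-t}-1)+t(1-\delta)\right]\right)$, and optimizing yields the stationary point $e^{-t}=1-\delta$, i.e. $t = -\ln(1-\delta)$, which substitutes to give $\left(e^{-\delta}/(1-\delta)^{1-\delta}\right)^{\mathbb{E}[X]}$.

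I expect the only delicate point — rather than a genuine obstacle — to be the role of the restriction $0<\delta<1$ in the lower tail: it is exactly what guarantees that the optimal $t=-\ln(1-\delta)$ is positive and that $\ln(1-\delta)$ is well defined, keeping the optimization in the admissible range. For the upper tail no such restriction is needed, since $t=\ln(1+\delta)>0$ for every $\delta>0$. Everything beyond the chord bound and the stationarity computation is routine calculus.
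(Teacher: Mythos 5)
Your proposal is correct: the Markov/exponential-moment argument, the chord bound $e^{tx}\leq 1+x(e^{t}-1)$ on $[0,1]$ followed by $1+y\leq e^{y}$, and the optimization at $t=\ln(1+\delta)$ (resp.\ $t=-\ln(1-\delta)$) is exactly the canonical derivation of the multiplicative Chernoff bound, and your observation about why $0<\delta<1$ is needed for the lower tail is on point. Note that the paper itself does not prove Lemma~\ref{lemma:chernoff} at all --- it is quoted as a known result with citations to the literature --- so there is no in-paper proof to compare against; your argument is the standard one that those references rely on and correctly supplies the missing derivation.
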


The expressions on the right-hand side of Eqs.~\eqref{eq:chernoff_upper} and \eqref{eq:chernoff_lower} can be cumbersome to work with directly. To simplify the process of bounding these probabilities, we use the following relaxation.

\begin{lemma}\label{lemma:simpler-chernoff}
	For any $\delta > 0$, we have
	\begin{equation}
		\frac{e^{\delta}}{(1 + \delta)^{1 + \delta}} < e^{-\frac{\delta^2}{2 + \delta}},
	\end{equation}
	which implies
	\begin{equation}
		\Pr(X \geq (1 + \delta)\mathbb{E}[X]) \leq e^{-\frac{\delta^2}{2 + \delta}\mathbb{E}[X]}.
	\end{equation}
	Similarly, for any $0 < \delta < 1$, we have
	\begin{equation}
		\frac{e^{-\delta}}{(1 - \delta)^{1 - \delta}} \leq e^{-\frac{\delta^2}{2 + \delta}},
	\end{equation}
	which implies
	\begin{equation}
		\Pr(X \leq (1 - \delta)\mathbb{E}[X]) \leq e^{-\frac{\delta^2}{2 + \delta}\mathbb{E}[X]}.
	\end{equation}
\end{lemma}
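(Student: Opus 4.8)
The plan is to reduce each of the two stated scalar inequalities to the sign of a smooth auxiliary function of $\delta$, and to control that sign through the function's low-order Taylor behavior at $\delta = 0$ together with a single elementary polynomial inequality. The implications for the probabilities then follow immediately from Lemma~\ref{lemma:chernoff}: its right-hand sides are positive quantities raised to the power $\mathbb{E}[X]$, which is nonnegative because $X$ is a sum of $[0,1]$-valued variables, so a scalar bound $0 < a \le b$ yields $a^{\mathbb{E}[X]} \le b^{\mathbb{E}[X]}$ directly.

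For the first inequality I would take logarithms and rearrange into the equivalent statement $f(\delta) := (1+\delta)\ln(1+\delta) - \delta - \tfrac{\delta^2}{2+\delta} > 0$ for all $\delta > 0$. A direct computation gives $f(0) = 0$ and $f'(\delta) = \ln(1+\delta) - \tfrac{\delta(4+\delta)}{(2+\delta)^2}$, so that $f'(0) = 0$ as well. Differentiating once more yields the clean expression $f''(\delta) = \tfrac{1}{1+\delta} - \tfrac{8}{(2+\delta)^3}$, and positivity of $f''$ on $(0,\infty)$ is equivalent, after clearing denominators, to $(2+\delta)^3 > 8(1+\delta)$, i.e.\ $\delta(4 + 6\delta + \delta^2) > 0$, which is manifest for $\delta > 0$. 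Integrating $f'' > 0$ from $0$ and using $f'(0) = 0$ gives $f' > 0$ on $(0,\infty)$, and integrating again from $0$ using $f(0) = 0$ gives $f > 0$, as required.

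The second inequality follows the same template on the restricted domain $0 < \delta < 1$. Taking logarithms reduces it to $g(\delta) := -\delta - (1-\delta)\ln(1-\delta) + \tfrac{\delta^2}{2+\delta} \le 0$. Again $g(0) = 0$ and $g'(\delta) = \ln(1-\delta) + \tfrac{\delta(4+\delta)}{(2+\delta)^2}$ with $g'(0) = 0$, while $g''(\delta) = -\tfrac{1}{1-\delta} + \tfrac{8}{(2+\delta)^3}$. Here the sign flips: the needed inequality $g'' < 0$ on $(0,1)$ is equivalent to $(2+\delta)^3 > 8(1-\delta)$, i.e.\ $\delta(20 + 6\delta + \delta^2) > 0$, which holds throughout $(0,1)$. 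Integrating twice from $0$ (with $g(0) = g'(0) = 0$) then gives $g < 0$ on $(0,1)$, hence the claimed non-strict bound.

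The only real subtlety is bookkeeping: one must verify that both auxiliary functions vanish to second order at $\delta = 0$, so that the sign of the second derivative alone controls the sign of the function via the double integration, and one must track the two different domains and the resulting sign reversal between $f''$ and $g''$. Everything else collapses to the two cubic inequalities displayed above, which are positive term by term. As a sanity check I would also inspect the boundary behavior as $\delta \to 1^-$ in the second case, although the integration argument anchored at $\delta = 0$ already settles the interior and thus the full statement.
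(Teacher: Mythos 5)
Your proof is correct, and it is worth noting that the paper itself states this lemma without any proof, treating it as a standard relaxation of the Chernoff bound; your argument therefore fills a gap rather than duplicating or diverging from an existing one. All the computational steps check out: $f''(\delta) = \tfrac{1}{1+\delta} - \tfrac{8}{(2+\delta)^3}$ and $g''(\delta) = -\tfrac{1}{1-\delta} + \tfrac{8}{(2+\delta)^3}$ are the correct second derivatives, the two cubic reductions $(2+\delta)^3 - 8(1+\delta) = \delta(4+6\delta+\delta^2)$ and $(2+\delta)^3 - 8(1-\delta) = \delta(20+6\delta+\delta^2)$ are right, and the double integration from the second-order zero at $\delta=0$ is sound (indeed it yields the strict inequality in both cases, which is more than the non-strict claim needs in the second). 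The passage to the probability bounds is also handled correctly: since $X \ge 0$ implies $\mathbb{E}[X] \ge 0$, the map $x \mapsto x^{\mathbb{E}[X]}$ is nondecreasing on $(0,\infty)$, so the scalar bounds compose with Lemma~\ref{lemma:chernoff}. One cosmetic remark: your closing worry about the boundary $\delta \to 1^-$ is unnecessary, since the domain in the statement is the open interval $(0,1)$ and your argument already covers all of it.
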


With this lemma, the upper and lower bounds for $X$ become symmetric, allowing us to determine a single value of $\delta$ that bounds the probability on both sides, given a small security parameter $\varepsilon$. We can formulate this question as follows.

\begin{question}\label{ques:lowerBound}
	Suppose we have a random variable $X = \sum_{i=1}^{n} X_i$, where each $X_i$ takes values in $[0, 1]$. Given a security parameter $\varepsilon$, find the smallest $\delta$ such that 
	\begin{equation}
		\Pr\left(\frac{X}{1 + \delta} \geq \mathbb{E}[X]\right) \leq \varepsilon,
	\end{equation}
	and
	\begin{equation}
		\Pr\left(\frac{X}{1 - \delta} \leq \mathbb{E}[X]\right) \leq \varepsilon.
	\end{equation}
\end{question}

If the expectation $\mathbb{E}[X]$ is known, solving the equation $e^{-\frac{\delta^2}{2 + \delta} \mathbb{E}[X]} = \varepsilon$ yields the desired $\delta_0$. Now, consider a more complex scenario where $\mathbb{E}[X]$ is unknown, and only a sample result $\phi$ drawn from the random variable $X$ is available. In this case, we substitute $\mathbb{E}[X]$ with $\phi/(1 + \delta)$, transforming the equation into:
\begin{equation}\label{Eq:chernoffluc}
	e^{-\frac{\delta^2}{2 + \delta} \frac{\phi}{1 + \delta}} = \varepsilon.
\end{equation}
When $\phi = (1 + \delta_0)\mathbb{E}[X]$, it is clear that the solution is exactly $\delta_0$. Since the function $e^{-\frac{\delta^2}{2 + \delta} \frac{1}{1 + \delta}}$ is monotonically decreasing for any $\delta > 0$, when $\phi < (1 + \delta_0)\mathbb{E}[X]$, which occurs with a probability greater than $1 - \varepsilon$, the solution will be larger than $\delta_0$, making it a valid bound for failure probability $\varepsilon$. Therefore, we have the following lemma.

\begin{lemma}[Symmetric Analytical Bound]\label{lemma:analytical}
	Suppose $X = \sum_{i=1}^{n} X_i$ is a random variable with each $X_i$ taking values in $[0, 1]$. Given a security parameter $0 < \varepsilon < 1$ and a sample result $\phi$ drawn from $X$, suppose $\phi > -6\ln\varepsilon$. Then, with a probability greater than $1 - \varepsilon$, the solution to Eq.~\eqref{Eq:chernoffluc}, which has the following form,
	\begin{equation}\label{eq:delta}
		\delta = \frac{-3\ln\varepsilon + \sqrt{\ln^2\varepsilon - 8\phi\ln\varepsilon}}{2(\phi + \ln\varepsilon)},
	\end{equation}
	satisfies
	\begin{equation}
		\Pr\left(\frac{X}{1 + \delta} \geq \mathbb{E}[X]\right) \leq \varepsilon,
	\end{equation}
	and
	\begin{equation}
		\Pr\left(\frac{X}{1 - \delta} \leq \mathbb{E}[X]\right) \leq \varepsilon.
	\end{equation}
\end{lemma}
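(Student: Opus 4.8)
The plan is to separate the argument into a deterministic algebraic part, which establishes the closed form~\eqref{eq:delta} and confines $\delta$ to the admissible interval $(0,1)$, and a probabilistic part, which lifts the known-mean guarantee of Lemma~\ref{lemma:simpler-chernoff} to the present sampled-mean setting through a monotonicity argument.

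First I would clear the exponential in \eqref{Eq:chernoffluc} by taking logarithms, giving $\frac{\delta^2}{(2+\delta)(1+\delta)}\,\phi = -\ln\varepsilon$. Multiplying out $(2+\delta)(1+\delta)=\delta^2+3\delta+2$ turns this into the quadratic $(\phi+\ln\varepsilon)\delta^2 + 3\ln\varepsilon\,\delta + 2\ln\varepsilon = 0$, whose roots are exactly \eqref{eq:delta} with the $\pm$ sign, so the closed form is immediate. I would then argue that the $+$ root is the relevant one: since $\ln\varepsilon<0$ and $\phi>-6\ln\varepsilon>-\ln\varepsilon$ forces $\phi+\ln\varepsilon>0$, the product of the roots $2\ln\varepsilon/(\phi+\ln\varepsilon)$ is negative, so there is a unique positive root, and a short sign check identifies it as the $+$ one. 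To place $\delta$ below $1$ I would evaluate the quadratic at $\delta=1$: the value equals $\phi+6\ln\varepsilon$, which is positive precisely under the hypothesis $\phi>-6\ln\varepsilon$. Because the parabola opens upward ($\phi+\ln\varepsilon>0$) and is negative at $\delta=0$ (value $2\ln\varepsilon<0$), its positive root lies strictly between $0$ and $1$. This is the step that explains the threshold $-6\ln\varepsilon$: it is exactly the condition making $0<\delta<1$, so that the lower tail of Lemma~\ref{lemma:simpler-chernoff} is applicable.

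For the probabilistic part I would follow the reasoning sketched before the lemma. Let $\delta_0$ denote the solution of $e^{-\frac{\delta_0^2}{2+\delta_0}\mathbb{E}[X]}=\varepsilon$ built from the \emph{true} mean; Lemma~\ref{lemma:simpler-chernoff} gives $\Pr(X\ge(1+\delta_0)\mathbb{E}[X])\le\varepsilon$ and $\Pr(X\le(1-\delta_0)\mathbb{E}[X])\le\varepsilon$ simultaneously, since both tails share the exponent $\frac{\delta^2}{2+\delta}\mathbb{E}[X]$. The key observation is that the map $\phi\mapsto\delta(\phi)$ defined by \eqref{Eq:chernoffluc} is strictly decreasing, because the left-hand side $e^{-\frac{\delta^2}{(2+\delta)(1+\delta)}\phi}$ is decreasing in $\delta$ for fixed $\phi>0$, and one checks directly that $\delta=\delta_0$ solves \eqref{Eq:chernoffluc} at $\phi=(1+\delta_0)\mathbb{E}[X]$. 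Hence on the event $\{X<(1+\delta_0)\mathbb{E}[X]\}$ the data-driven solution obeys $\delta(X)>\delta_0$. I would convert this into the two bounds by inclusion of events: if $X\ge(1+\delta(X))\mathbb{E}[X]$ and $X<(1+\delta_0)\mathbb{E}[X]$ both held, then $\delta(X)>\delta_0$ would give $X\ge(1+\delta(X))\mathbb{E}[X]>(1+\delta_0)\mathbb{E}[X]>X$, a contradiction, so $\{X\ge(1+\delta(X))\mathbb{E}[X]\}\subseteq\{X\ge(1+\delta_0)\mathbb{E}[X]\}$ and the upper bound follows. The lower bound is symmetric: on $\{X\le(1-\delta(X))\mathbb{E}[X]\}$ one has $X<\mathbb{E}[X]<(1+\delta_0)\mathbb{E}[X]$, so again $\delta(X)>\delta_0$, whence $X\le(1-\delta(X))\mathbb{E}[X]<(1-\delta_0)\mathbb{E}[X]$, giving $\{X\le(1-\delta(X))\mathbb{E}[X]\}\subseteq\{X\le(1-\delta_0)\mathbb{E}[X]\}$ and the claim.

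The step I expect to be most delicate is making the lower-tail argument fully rigorous when the reference parameter $\delta_0$ is not itself below $1$, i.e. when $\mathbb{E}[X]\le-3\ln\varepsilon$, since Lemma~\ref{lemma:simpler-chernoff} controls the lower tail only for $\delta<1$. Here I would note that if $\delta_0\ge1$ then $(1-\delta_0)\mathbb{E}[X]\le0$, so the event $\{X\le(1-\delta_0)\mathbb{E}[X]\}$ is empty for $X\ge0$ (or the null event $\{X=0\}$ at the boundary $\delta_0=1$), and the lower bound holds trivially; only the genuinely informative regime $\delta_0<1$ actually invokes the Chernoff lower tail. The remaining subtlety is conceptual rather than computational: the hypothesis $\phi>-6\ln\varepsilon$ is imposed on the observed sample to guarantee a usable $\delta\in(0,1)$, whereas the two conclusions are statements about the random $X$, and I would keep these two roles of $\phi$ cleanly separated by treating $\delta(\cdot)$ as a fixed decreasing function and invoking the randomness only through the two tail events above.
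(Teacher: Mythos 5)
Your proof is correct and takes essentially the same route as the paper, whose own argument is the discussion preceding the lemma: define $\delta_0$ from the true mean, use monotonicity of the solution of Eq.~\eqref{Eq:chernoffluc} in $\phi$ to get $\delta(\phi)>\delta_0$ on the high-probability event $\phi<(1+\delta_0)\mathbb{E}[X]$, and invoke Lemma~\ref{lemma:simpler-chernoff}. Your additions --- the explicit quadratic derivation showing that $\phi>-6\ln\varepsilon$ is precisely the condition for $\delta\in(0,1)$, the event-inclusion formalization, and the $\delta_0\geq 1$ edge case --- are rigorous fillings-in of steps the paper merely asserts, not a different approach.
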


The condition $\phi > -6\ln\varepsilon$ is needed to ensure that $\delta$ remains within the range $[0, 1]$. When this condition is not met, this analytical bound is no longer applicable. In such a case, we must revert to the original version in Lemma~\ref{lemma:chernoff}. In this case, $\delta$ becomes asymmetric for the lower and upper bounds, and no analytical solution is available. However, due to the monotonicity of the expressions, we can use binary search to solve the following equations:
\begin{equation}
    \begin{split}
        &\left(\frac{e^{\delta_l}}{(1+\delta_l)^{1+\delta_l}}\right)^{\frac{\phi}{1+\delta_l}}=\varepsilon\\
        &\left(\frac{e^{-\delta_u}}{(1-\delta_u)^{1-\delta_u}}\right)^{\frac{\phi}{1+\delta_u}}=\varepsilon,
    \end{split}
\end{equation}
to obtain $\delta_l$ and $\delta_u$. 

Furthermore, we observe that as $\phi$ approaches $-6\ln\varepsilon$, the result for $\delta$ in Eq.~\eqref{eq:delta} becomes less tight and can be up to twice as large as the asymmetric values $\delta_l,\delta_u$ obtained through binary search. Empirical observations indicate that when $\phi > -100\ln\varepsilon$, Eq.~\eqref{eq:delta} provides a sufficiently tight bound. 

\section{Bounds from Linear Equations}\label{Sec:Linear}
In QKD, the information reconciliation term can be directly obtained after Alice and Bob complete the reconciliation process. The core challenge of security analysis lies in determining the privacy amplification term from Eq.~\eqref{Eq:KeyRate}. For simplicity, we focus on the one-decoy state method~\cite{Ma2005practical}, which utilizes two intensities: $\mu$ and $\nu$, where $\mu > \nu$. However, similar derivations can be extended to more general scenarios involving multiple decoy states.

With phase-randomized coherent state sources, as described in Eq.~\eqref{eq:photonnumber}, the photon number channel model can be assumed in the security analysis~\cite{Ma2008PhD}. For each intensity, Alice and Bob can derive a set of linear equations as in Eq.~\eqref{eq:QEeqs}. The coefficients of these linear equations are governed by the Poisson distribution ${\mu^i e^{-\mu}}/{i!}$ for each photon number $i$. This method can be adapted for other photon sources~\cite{Ma2008PhD}, such as thermal states, to accommodate different experimental setups.

\subsection{General Framework}\label{Subsec:framework}
In the development of the decoy-state method, our goal is to solve the following minimization problem:
\begin{equation}\label{Eq:minPAterm}
	\begin{aligned}
		\min_{Y_1, e_1} \quad Y_1\left[1 - h\left(e_1\right)\right]
	\end{aligned}
\end{equation}
subject to the constraints $Y_i, e_i \in [0,1]$ for all $i$, and:
\begin{equation}\label{Eq:linearConstr}
	\begin{aligned}
		Q_{\mu} &= e^{-\mu} Y_0 + \mu e^{-\mu} Y_1 + \frac{\mu^2}{2} e^{-\mu} Y_2 + \frac{\mu^3}{3!} e^{-\mu} Y_3 + \cdots,\\		
		Q_{\nu} &= e^{-\nu} Y_0 + \nu e^{-\nu} Y_1 + \frac{\nu^2}{2} e^{-\nu} Y_2 + \frac{\nu^3}{3!} e^{-\nu} Y_3 + \cdots,\\
		E_{\mu} Q_{\mu} &= e^{-\mu} Y_0 e_0 + \mu e^{-\mu} Y_1 e_1 + \frac{\mu^2}{2} e^{-\mu} Y_2 e_2 + \frac{\mu^3}{3!} e^{-\mu} Y_3 e_3 + \cdots,\\
		E_{\nu} Q_{\nu} &= e^{-\nu} Y_0 e_0 + \nu e^{-\nu} Y_1 e_1 + \frac{\nu^2}{2} e^{-\nu} Y_2 e_2 + \frac{\nu^3}{3!} e^{-\nu} Y_3 e_3 + \cdots.\\
	\end{aligned}
\end{equation}
For simplicity, we use the one-decoy method as an example, but in practice, more decoy states could be used to add additional linear constraints, such as incorporating a vacuum decoy state.

Our goal is to bound $Y_1$ and $e_1$ to determine the privacy amplification term and thus the key rate. Two essential equations help in bounding the single-photon components:
\begin{equation}
	\begin{aligned}
		q_{\mu} &= y_0 + \mu y_1 + \frac{\mu^2}{2} y_2 + \frac{\mu^3}{3!} y_3 + \frac{\mu^4}{4!} y_4 + \cdots,\\
		q_{\nu} &= y_0 + \nu y_1 + \frac{\nu^2}{2} y_2 + \frac{\nu^3}{3!} y_3 + \frac{\nu^4}{4!} y_4 + \cdots,
	\end{aligned}
\end{equation}
where $\mu$ and $\nu$ represent the intensities of the signal and decoy states, respectively, with $\mu > \nu$. Here, $q$ represents measurable quantities like gains or error rates, while $y_i$ are the unknown variables potentially influenced by Eve.

Since the two-photon term is the next largest component after the single-photon term, we can eliminate it, leading to the following equation:
\begin{equation}
	\begin{aligned}
		\mu^2 q_{\nu} - \nu^2 q_{\mu} &= (\mu^2 - \nu^2) y_0 + \mu\nu(\mu - \nu) y_1 - \frac{\mu^2 \nu^2 (\mu - \nu)}{3!} y_3 - \frac{\mu^2 \nu^2 (\mu^2 - \nu^2)}{4!} y_4 - \cdots.
	\end{aligned}
\end{equation}
Our primary interest is in the unknown term $y_1$, which is expressed as:
\begin{equation} \label{eq:y1est}
	y_1 = \frac{1}{\mu\nu(\mu - \nu)} \left[ \mu^2 q_{\nu} - \nu^2 q_{\mu} - (\mu^2 - \nu^2) y_0 + \frac{\mu^2 \nu^2 (\mu - \nu)}{3!} y_3 + \frac{\mu^2 \nu^2 (\mu^2 - \nu^2)}{4!} y_4 + \cdots \right].
\end{equation}
If three different intensities are used, the three-photon term can also be eliminated. More generally, using $k$ different intensities, we can eliminate up to $k$-photon terms. As the number of decoy states increases, the remaining terms become exponentially smaller, allowing for tighter bounds on $y_1$ and better security of the final key rate~\cite{Yuan2016Simulating}.

\subsection{Previous Results}
In the earlier work on the vacuum+weak decoy-state method~\cite{Ma2005practical}, the gains for different intensities can be expressed as $q_\mu = Q_{\mu} e^{\mu}$ and $q_\nu = Q_{\nu} e^{\nu}$, which can be substituted into Eq.~\eqref{eq:y1est}. Additionally, there is a constraint from the vacuum decoy state that ensures $y_i = Y_i \ge 0$. Using these, we can derive a lower bound for $Y_1$ as follows:
\begin{equation} \label{eq:oldY1Low}
	Y_1 \ge \frac{1}{\mu \nu (\mu - \nu)} \left[ \mu^2 Q_{\nu} e^{\nu} - \nu^2 Q_{\mu} e^{\mu} - (\mu^2 - \nu^2) Y_0 \right],
\end{equation}
where $q_\mu = Q_{\mu} e^{\mu}$ and $q_\nu = Q_{\nu} e^{\nu}$ have been substituted. This bound can be reached if Eve sets $Y_i = 0$ for all $i \geq 3$. Note that the constraint on $Y_2$, specifically $Y_2 \in [0,1]$, generally holds true in this context.

To bound $e_1$, we use the error rate constraints from Eq.~\eqref{Eq:linearConstr}:
\begin{equation}\label{eq:oldE1}
	e_1 \le \min\left[ \frac{E_{\mu} Q_{\mu} e^{\mu} - Y_0 e_0}{\mu Y_1}, \frac{E_{\nu} Q_{\nu} e^{\nu} - Y_0 e_0}{\nu Y_1} \right].
\end{equation}
Typically, the second term in the minimization is smaller if we do not consider statistical fluctuations. To obtain an upper bound on $e_1$, we substitute the lower bound for $Y_1$ derived above. Note that this bound on $e_1$ tends to be a bit loose. For equality to hold, Eve would need to set $e_i = 0$ for all $i \geq 2$. However, with this setting, Alice and Bob can solve the linear equations for error rates to derive a tighter bound.

\subsection{Improved Estimation}\label{subsec:eve}
In this subsection, we provide an analytical parameter configuration for the minimization problem in Eq.~\eqref{Eq:minPAterm}. While the solution is exact, applying it directly to fluctuation analysis is challenging due to its piecewise nature and the requirement of vacuum decoy states. Nevertheless, this exact form offers valuable insights, motivating subsequent derivations, as demonstrated in the next subsection.

\begin{theorem}[Eve's Optimal Strategy]\label{thm:best}
	Assume $Y_0$ is a fixed constant. To minimize $Y_1[1-h(e_1)]$, Eve should set $e_i=1$ for any $i\geq 3$. For the yield $Y_i$, let $k_0$ be the smallest integer such that
	\begin{equation}
		\sum_{i=k_0}^{\infty}\frac{\mu^{i-1}-\nu^{i-1}}{i!}\leq\frac{1}{\mu}e^{\mu}E_{\mu}Q_{\mu}-\frac{1}{\nu}e^{\nu}E_{\nu}Q_{\nu}+\frac{\mu-\nu}{2\mu\nu}Y_0,
	\end{equation}
	then the configuration follows:
	\begin{equation}
		Y_i=\left\{\begin{aligned}
			&0, & 3\leq &k\leq k_0-2\\
			&\frac{k!}{\mu^{k-1}-\nu^{k-1}}\left(\frac{1}{\mu}e^{\mu}E_{\mu}Q_{\mu}-\frac{1}{\nu}e^{\nu}E_{\nu}Q_{\nu}+\frac{\mu-\nu}{2\mu\nu}Y_0-\sum_{i=k_0}^{\infty}\frac{\mu^{i-1}-\nu^{i-1}}{i!}\right),& &k=k_0-1\\
			&1,& &k\geq k_0
		\end{aligned}\right..
	\end{equation}
\end{theorem}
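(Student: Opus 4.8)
The plan is to turn this infinite-dimensional minimization into a finite fractional-knapsack problem and solve it by an exchange argument. First I would exploit the Poisson structure of the constraints~\eqref{Eq:linearConstr}. Eliminating the two-photon term from the two gain equations, exactly as in~\eqref{eq:y1est}, writes $Y_1=A+\sum_{i\ge3}\alpha_i Y_i$ with intercept $A=[\mu^2 Q_\nu e^\nu-\nu^2 Q_\mu e^\mu-(\mu^2-\nu^2)Y_0]/[\mu\nu(\mu-\nu)]$ and strictly positive coefficients $\alpha_i=\mu\nu(\mu^{i-2}-\nu^{i-2})/[i!(\mu-\nu)]$. Since the two error equations carry identical coefficients, the same elimination applied to the error-weighted yields $b_i:=Y_i e_i$ gives $b_1=A'+\sum_{i\ge3}\alpha_i b_i$ with the same $\alpha_i$, where $A'$ is obtained from $A$ by $Q\mapsto EQ$ and $Y_0\mapsto Y_0 e_0=Y_0/2$. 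Combining the two error equations the other way eliminates both $b_1$ and $b_0$ and produces the budget identity $\sum_{i\ge2} w_i b_i=C$ with $w_i=(\mu^{i-1}-\nu^{i-1})/i!$ and $C$ equal to the right-hand side in the theorem. The objective in~\eqref{Eq:minPAterm} then depends on the two scalars $Y_1$ and $b_1$ only.

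Second, I would fix the error rates to prove $e_i=1$ for $i\ge3$. Holding every $b_i$ (hence $b_1$) fixed and lowering some $Y_i$ with $i\ge3$ lowers $Y_1$ without changing $b_1$; the floor is $Y_i=b_i$, i.e.\ $e_i=1$. This decreases the objective because $\left.\partial_{Y_1}\{Y_1[1-h(e_1)]\}\right|_{b_1}=1+\log(1-e_1)>0$ whenever $e_1<1/2$, the regime of interest. Hence an optimal Eve takes $e_i=1$ for all $i\ge3$.

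Third, on the surface $b_i=Y_i$ ($i\ge3$) I would reduce to a knapsack. There $Y_1-b_1=A-A'$ is constant, so with $S:=\sum_{i\ge3}\alpha_i Y_i$ one has $Y_1=A+S$, $b_1=A'+S$, and a direct computation gives $\partial_S\{Y_1[1-h(e_1)]\}=1+\log e_1<0$ for $e_1<1/2$; Eve therefore wants $S$ maximal. Because $b_1$ is independent of $b_2$, any budget placed on the two-photon term is wasted, so Eve sets $b_2=0$ and the identity forces $\sum_{i\ge3} w_i Y_i=C$. The problem becomes the fractional knapsack $\max\sum_{i\ge3}\alpha_i Y_i$ subject to $\sum_{i\ge3} w_i Y_i\le C$ and $Y_i\in[0,1]$.

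Finally I would solve this knapsack greedily, which hinges on the monotonicity of the efficiency $\alpha_i/w_i=\tfrac{\nu}{\mu-\nu}\,\tfrac{1-r^{i-2}}{1-r^{i-1}}$, where $r:=\nu/\mu\in(0,1)$. Rewriting $\tfrac{1-r^{i-2}}{1-r^{i-1}}=(\sum_{j=0}^{i-3}r^j)/(\sum_{j=0}^{i-2}r^j)$ exhibits it as increasing in $i$, so efficiency grows with the photon number. An exchange argument---if $Y_i>0$ and $Y_j<1$ for some $3\le i<j$, shift weight from $i$ to $j$ to strictly raise the objective---then forces the threshold shape $Y_i=1$ for $i\ge k_0$, $Y_i=0$ for $3\le i\le k_0-2$, and a single partial value at $k_0-1$; taking $k_0$ as the least integer with $\sum_{i\ge k_0} w_i\le C$ matches the stated threshold, and saturation $\sum_{i\ge k_0}w_i+w_{k_0-1}Y_{k_0-1}=C$ reproduces the stated partial value. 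The crux is this monotonicity of $\alpha_i/w_i$, which pins down the water-filling direction; the remaining care lies in the standing assumption $e_1\le1/2$ (outside which the two derivatives change sign) and in checking that the determined quantities $Y_2,b_2$ stay in $[0,1]$ with $b_2\le Y_2$, as noted to hold generically above.
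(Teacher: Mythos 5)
Your proposal is correct and follows essentially the same route as the paper's proof: eliminate the two-photon terms, use the derivatives $1+\log(1-e_1)>0$ and $1+\log e_1<0$ (for $e_1<1/2$) to force $e_i=1$ for $i\geq 3$ and then to maximize the remaining sum, set $e_2Y_2=0$ to obtain the budget constraint of Eq.~\eqref{Eq:x_constr}, and fill the yields greedily from large photon numbers downward. The only difference is presentational: you make the final step explicit as a fractional knapsack with an exchange argument and actually prove the monotonicity of the efficiency ratio $(\mu^{i-2}-\nu^{i-2})/(\mu^{i-1}-\nu^{i-1})$, which the paper merely asserts.
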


\begin{proof}
	As is often done in the decoy-state method, we first eliminate $e_2, Y_2$ and express $e_1, Y_1$ as functions of other variables:
	\begin{equation}
		\begin{split}
			Y_1&=\frac{1}{\mu\nu(\mu-\nu)}\left[\Bigl(\mu^2e^{\nu}Q_{\nu}-\nu^2e^{\mu}Q_{\mu}\Bigr)-\Bigl((\mu^2-\nu^2)Y_0\Bigr)+\Bigl(\mu^2\nu^2\sum_{i=3}^{\infty}\frac{\mu^{i-2}-\nu^{i-2}}{i!}Y_i\Bigr)\right],\\
			e_1&=\frac{\Bigl(\mu^2e^{\nu}E_{\nu}Q_{\nu}-\nu^2e^{\mu}E_{\mu}Q_{\mu}\Bigr)-\Bigl((\mu^2-\nu^2)e_0Y_0\Bigr)+\Bigl(\mu^2\nu^2\sum_{i=3}^{\infty}(\mu^{i-2}-\nu^{i-2})e_iY_i/i!\Bigr)}{\Bigl(\mu^2e^{\nu}Q_{\nu}-\nu^2e^{\mu}Q_{\mu}\Bigr)-\Bigl((\mu^2-\nu^2)Y_0\Bigr)+\Bigl(\mu^2\nu^2\sum_{i=3}^{\infty}(\mu^{i-2}-\nu^{i-2})Y_i/i!\Bigr)}.
		\end{split}
	\end{equation}
	
Define several intermediate variables:
\begin{equation}\label{Eq:definitions}
	\begin{split}
		C_1&=\mu^2e^{\nu}Q_{\nu}-\nu^2e^{\mu}Q_{\mu},\\
		C_2&=\mu^2e^{\nu}E_{\nu}Q_{\nu}-\nu^2e^{\mu}E_{\mu}Q_{\mu},\\
		z&=(\mu^2-\nu^2)e_0Y_0,\\
		x&=\mu^2\nu^2\sum_{i=3}^{\infty}(\mu^{i-2}-\nu^{i-2})Y_i/i!,\\
		y&=\mu^2\nu^2\sum_{i=3}^{\infty}(\mu^{i-2}-\nu^{i-2})e_iY_i/i!.
	\end{split}
\end{equation}
Here, $e_0=1/2$, and $C_1, C_2$ are constants. Since $Y_0$ is fixed, $z$ is also a constant and the objective to minimize becomes a function of $x, y$:
\begin{equation}
	Y_1[1-h(e_1)]=\frac{F(x,y)}{\mu\nu(\mu-\nu)}=\frac{C_1-2z+x}{\mu\nu(\mu-\nu)}\left[1-h\left(\frac{C_2-z+y}{C_1-2z+x}\right)\right].
\end{equation}

To determine the monotonicity of this function, we find the partial derivatives:
\begin{equation}
	\begin{split}
		\frac{\partial{F}}{\partial{x}}&=1+\log(1-e_1),\\
		\frac{\partial{F}}{\partial{y}}&=\log(e_1)-\log(1-e_1).
	\end{split}
\end{equation}
When $e_1>1/2$, there will be no key generated by QKD. Hence, we are only interested in the domain of $e_1<1/2$. Then, $\frac{\partial{F}}{\partial{x}} > 0$, $\frac{\partial{F}}{\partial{y}} < 0$, and $\frac{\partial{F}}{\partial{z}} > 0$ always hold. Therefore, the best strategy for Eve is to maximize $y$ and minimize $x$. Since $0\leq y\leq x$, the optimal choice is to set $y=x$, i.e., $e_i=1$ for all $i\geq 3$. 

Under this condition, the function $F$ only depends on $x$:
\begin{equation}\label{Eq:withouty}
	F=(C_1-2z+x)\left[1-h\left(\frac{C_2-z+x}{C_1-2z+x}\right)\right].
\end{equation}
We compute the derivative of $F$ with respect to $x$:
\begin{equation}
	\frac{\mathrm{d}F}{\mathrm{d}x}=1+\log(e_1).
\end{equation}
Since this derivative is always negative for $e_1 < 1/2$, Eve should maximize $x$. When $x$ grows, $e_1,Y_1$ will increase simultaneously. To keep the fourth line in Eq.~\eqref{Eq:linearConstr} holds, $e_2Y_2$ will inevitably decrease. As a result, $x$ reaches its maximum when $e_2Y_2=0$. Combining the third and fourth lines of Eq.~\eqref{Eq:linearConstr} to eliminate $e_1Y_1$, we obtain:
\begin{equation}\label{Eq:x_constr}
	\sum_{i=3}^{\infty}\frac{\mu^{i-1}-\nu^{i-1}}{i!}Y_i=\frac{1}{\mu}e^{\mu}E_{\mu}Q_{\mu}-\frac{1}{\nu}e^{\nu}E_{\nu}Q_{\nu}+\frac{\mu-\nu}{2\mu\nu}Y_0.
\end{equation}
Since $\frac{\mu^{i-2}-\nu^{i-2}}{\mu^{i-1}-\nu^{i-1}}$ is a monotonically increasing series, the strategy to maximize $x$ is to set all $Y_i=1$ for sufficiently large $i$ until a point $k$ is reached such that for some $Y_k \leq 1$, Eq.~\eqref{Eq:x_constr} is satisfied.
\end{proof}

At first glance, this result may seem counter-intuitive. Why would Eve willingly increase errors for $e_i$ when $i \geq 3$, considering they contribute nothing to hacking the final key rate? The rationale lies in the decoy-state method, where we eliminate variables $e_2$ and $Y_2$, providing no advantage to Eve in manipulating $e_2$. Attempting to minimize $e_2$ indirectly increases $e_1$, which decreases the final key rate. Consequently, Eve is compelled to increase $e_i$ and $Y_i$ for $i \geq 3$ to satisfy the linear constraints in Eq.~\eqref{Eq:linearConstr}.

The assumption that $Y_0$ is fixed holds when Alice and Bob adopt a vacuum decoy state to directly measure $Y_0$. However, in the one-decoy-state method, $Y_0$ is uncertain. Interestingly, the minimal value of $Y_1[1-h(e_1)]$ has a complex relationship with $Y_0$, depending on the values of $\mu$ and $\nu$. In the next subsection, we explore how linear expansion can simplify this problem.

\subsection{Linear Expansion}\label{subsec:improved}
To solve the minimization problem of Eq.~\eqref{Eq:minPAterm}, it is inefficient to bound $Y_1$ and $e_1$ separately. Instead, we can directly bound Eq.~\eqref{Eq:minPAterm} by transforming it into a linear function of $e_1$. Since $1-h(e)$ is a convex function in $e$, it can be lower bounded by any tangent line to the curve. For all $e_t\in(0,1)$, we define the parameters as:
\begin{equation} 
	\begin{split}
		a &= \log (1-e_t) + 1, \\
		b &= \log(1-e_t) - \log{e_t},
	\end{split}
\end{equation} 
such that the following inequality holds:
\begin{equation}\label{Eq:expansion}
	\begin{split}
		1-h(e) &\ge a-be.
	\end{split}
\end{equation}
As shown in Figure \ref{fig:he}, the tangent line at any point $e_t \in (0,1)$ provides a lower bound for $1-h(e)$. Here, we choose $a$ and $b$ such that the line $f(e) = a - b e$ is the tangent line at the point $e = e_t$.

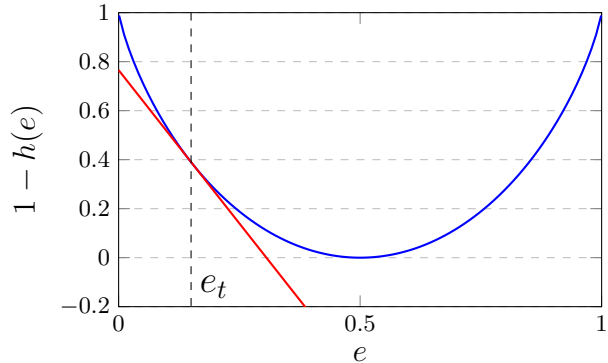
\begin{figure}[htbp!]
	\begin{tikzpicture}
		\begin{axis}[
			scale=.7,
			width=0.6\linewidth,
			height=0.4\linewidth,
			xlabel={\large $e$},
			ylabel={\large $1-h(e)$},
			xmin=0, xmax=1,
			ymin=-.2, ymax=1,
			xtick={0,0.5,1},
			ytick={-0.2,0,0.2,0.4,0.6,0.8,1},
			legend pos=north west,
			ymajorgrids=true,
			grid style=dashed,
			]
			\addplot[domain=0.001:0.999, samples=100, color=blue, thick] {1 + x*ln(x)/ln(2) + (1-x)*ln(1-x)/ln(2)};
			\addplot[domain=0:1, samples=2, color=red, thick] {0.76553 - 2.5025*x};
			\draw[dashed] (axis cs:0.15,1)--(axis cs:0.15,-0.2) node[at end, above right] {\Large $e_t$};
		\end{axis}
	\end{tikzpicture}
	\caption{Plot of $1-h(e)$ and its tangent line at $e=e_t$.} \label{fig:he}
\end{figure}

Treating Eq.~\eqref{Eq:withouty} as a function of $x$ and $z$ and employing Eq.~\eqref{Eq:expansion}, we have:
\begin{equation}\label{Eq:linearelax}
	\begin{split}
		Y_1[1-h(e_1)]&\geq\frac{C_1-2z+x}{\mu\nu(\mu-\nu)}\left[1-h\left(\frac{C_2-z+x}{C_1-2z+x}\right)\right],\\
		&\geq\frac{C_1-2z+x}{\mu\nu(\mu-\nu)}\left[a-b\cdot\frac{C_2-z+x}{C_1-2z+x}\right],\\
		&\geq\frac{1}{\mu\nu(\mu-\nu)}\Bigl[aC_1-bC_2+(a-b)\nu(\nu e^{\mu}E_{\mu}Q_{\mu}-\mu e^{\nu}E_{\nu}Q_{\nu})\Bigr].
	\end{split}
\end{equation}
The last line follows because:
\begin{equation}
	\begin{split}
		x&<\mu\nu^2\sum_{i=3}^{\infty}\frac{\mu^{i-1}-\nu^{i-1}}{i!}Y_i,\\
		&=\nu\Bigl(\nu e^{\mu}E_{\mu}Q_{\mu}-\mu e^{\nu}E_{\nu}Q_{\nu}\Bigr)+\frac{\nu}{\mu+\nu}z,
	\end{split}
\end{equation}
from Eq.~\eqref{Eq:x_constr} and, by appropriately selecting $a$ and $b$, we can ensure that $(a - b)\frac{\nu}{\mu + \nu} + (b - 2a)$ is non-negative.

We can interpret this lengthy expression in a more explicit way. Tracing back to the third line in Eq.~\eqref{Eq:linearelax}, we neglect the term $(a-b)\nu(\nu e^{\mu}E_{\mu}Q_{\mu}-\mu e^{\nu}E_{\nu}Q_{\nu})$ derived from $x$ and $z$ and focus solely on $aC_1 - bC_2$. This result can be derived using the general framework introduced in Section \ref{Subsec:framework} by setting:
\begin{equation} 
	\begin{split}
		q_{\mu} &= aQ_{\mu}e^{\mu} -bE_{\mu}Q_{\mu} e^{\mu}, \\
		q_{\nu} &= aQ_{\nu}e^{\nu} -bE_{\nu}Q_{\nu} e^{\nu}, \\
		y_i &= aY_i-bY_ie_i.
	\end{split}
\end{equation}
Since for the vacuum state $e_0 = 1/2$, it is easy to verify that $y_0 \geq 0$. If $y_i \geq 0$ holds for all $i \geq 3$, then using Eq.~\eqref{eq:y1est}, we obtain:
\begin{equation}\label{eq:ImprovedEstimation}
	\begin{split}
		Y_1[1-h(e_1)] &\ge y_1 \\
		&\stackrel{?}{\geq} \frac{1}{\mu\nu(\mu-\nu)} \left[\mu^2q_{\nu}-\nu^2q_{\mu} \right] \\
		&= \frac{1}{\mu\nu(\mu-\nu)} \left[\mu^2 (aQ_{\nu}e^{\nu} -bE_{\nu}Q_{\nu} e^{\nu})-\nu^2 (aQ_{\mu}e^{\mu} -bE_{\mu}Q_{\mu} e^{\mu}) \right].
	\end{split}
\end{equation}
This is the term $\frac{1}{\mu\nu(\mu - \nu)}(aC_1 - bC_2)$. However, the greater-than sign in the second line is questionable since the assumption $y_i \geq 0$ does not generally hold. In fact, as shown in Figure \ref{fig:he}, this fails for all $i \geq 3$ under Eve's optimal strategy, where Eve sets $e_i = 1$ for all $i \geq 3$. This failure is why the term $(a - b)\nu(\nu e^{\mu}E_{\mu}Q_{\mu} - \mu e^{\nu}E_{\nu}Q_{\nu})$ appears in the final linear expression in Eq.~\eqref{Eq:linearelax}. With this additional term, the bound becomes exact.

The reason for dividing Eq.~\eqref{Eq:linearelax} into two terms is to provide an intuitive understanding: compared to $aC_1 - bC_2$, the additional term is typically small and can be regarded as a correction for a simpler setting where all $e_i = 0$ and $Y_i = 0$ for $i \geq 3$. This simpler setting is easier to handle, involving only a finite number of non-zero variables, and the result derived from it, after applying corrections, is quite satisfactory. We formalize this interpretation in the following theorem.

\begin{theorem}[Asymptotic Bound]\label{thm:simple}
	Suppose the intensities of the signal and decoy states are $\mu$ and $\nu$, respectively. In the asymptotic limit, the gain and bit error rate corresponding to the signal and decoy states are $Q_{\mu}, E_{\mu}$ and $Q_{\nu}, E_{\nu}$. Then the privacy amplification term can be lower bounded by:
	\begin{equation}\label{eq:keyrateboundasym}
		Y_1[1-h(e_1)] \ge Y_1^*[1-h(e_1^*)]+\frac{\nu}{2}(1+\log e_1^*)e_2^*Y_2^*,
	\end{equation}
	where $e_1^*, Y_1^*, e_2^*, Y_2^*$ are the solution to equations:
    \begin{equation}\label{eq:simple-setting}
        \begin{split}
            Q_{\mu}&=\mu e^{-\mu}Y_1^*+\frac{\mu^2}{2} e^{-\mu}Y_2^*,\\
            Q_{\nu}&=\nu e^{-\nu}Y_1^*+\frac{\nu^2}{2} e^{-\nu}Y_2^*,\\
            E_{\mu}Q_{\mu}&=\mu e^{-\mu}Y_1^*e_1^*+\frac{\mu^2}{2} e^{-\mu}Y_2^*e_2^*,\\
            E_{\nu}Q_{\nu}&=\nu e^{-\nu}Y_1^*e_1^*+\frac{\nu^2}{2} e^{-\nu}Y_2^*e_2^*.
        \end{split}
    \end{equation}             
\end{theorem}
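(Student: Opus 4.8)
The plan is to recognize the claimed bound not as a new estimate but as a single, judiciously chosen instance of the linear relaxation already established in the last line of Eq.~\eqref{Eq:linearelax}, and then to verify by direct algebra that this instance collapses to the stated expression. Concretely, I would take the tangent point in Eq.~\eqref{Eq:expansion} to be $e_t = e_1^*$, the single-photon phase error rate produced by the four-equation system Eq.~\eqref{eq:simple-setting}, so that $a = \log(1-e_1^*)+1$ and $b = \log(1-e_1^*) - \log e_1^*$. With this choice the line $f(e)=a-be$ is tangent to $1-h(e)$ at $e_1^*$, which supplies the two identities I will lean on throughout: $a - b e_1^* = 1 - h(e_1^*)$ (the tangency condition) and $a - b = 1 + \log e_1^*$.

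First I would solve the simplified system Eq.~\eqref{eq:simple-setting}. Rewriting its first two lines as $e^{\mu}Q_\mu = \mu Y_1^* + \tfrac{\mu^2}{2}Y_2^*$ and $e^{\nu}Q_\nu = \nu Y_1^* + \tfrac{\nu^2}{2}Y_2^*$ and eliminating $Y_2^*$ gives $C_1 = \mu^2 e^{\nu}Q_\nu - \nu^2 e^{\mu}Q_\mu = \mu\nu(\mu-\nu)Y_1^*$; the same manipulation applied to the two error-weighted lines, written as $e^{\mu}E_\mu Q_\mu = \mu Y_1^* e_1^* + \tfrac{\mu^2}{2}Y_2^* e_2^*$ and $e^{\nu}E_\nu Q_\nu = \nu Y_1^* e_1^* + \tfrac{\nu^2}{2}Y_2^* e_2^*$, gives $C_2 = \mu\nu(\mu-\nu)Y_1^* e_1^*$, so $C_2/C_1 = e_1^*$. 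These two relations are exactly what make the choice $e_t = e_1^*$ the natural one: substituting them into $aC_1 - bC_2 = \mu\nu(\mu-\nu)Y_1^*(a - b e_1^*)$ and invoking the tangency identity yields $aC_1 - bC_2 = \mu\nu(\mu-\nu)Y_1^*[1-h(e_1^*)]$, which after dividing by $\mu\nu(\mu-\nu)$ reproduces the first term of the theorem.

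Next I would handle the correction term $(a-b)\nu(\nu e^{\mu}E_\mu Q_\mu - \mu e^{\nu}E_\nu Q_\nu)$ from the last line of Eq.~\eqref{Eq:linearelax}. Using the same two error equations of Eq.~\eqref{eq:simple-setting}, the $Y_1^* e_1^*$ contributions cancel and one is left with $\nu e^{\mu}E_\mu Q_\mu - \mu e^{\nu}E_\nu Q_\nu = \tfrac{\mu\nu(\mu-\nu)}{2}Y_2^* e_2^*$. Multiplying by $(a-b)\nu = (1+\log e_1^*)\nu$ and dividing the whole bound by $\mu\nu(\mu-\nu)$ turns this into exactly $\tfrac{\nu}{2}(1+\log e_1^*)e_2^* Y_2^*$, the second term of the theorem; assembling the two pieces then gives the claimed inequality. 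Note that $Y_2^*, e_2^*$ need not be individually physical, since only their product enters and the system Eq.~\eqref{eq:simple-setting} is merely a device for re-expressing the linear bound.

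The main obstacle is not the algebra but confirming that $e_t = e_1^*$ is an \emph{admissible} tangent point, i.e.\ that the nonnegativity condition $(a-b)\tfrac{\nu}{\mu+\nu} + (b-2a)\ge 0$ used in deriving Eq.~\eqref{Eq:linearelax} (which lets one discard the $Y_0$-dependent $z$-term while preserving a lower bound) actually holds here. Written out with the tangent parameters this becomes $(1+\log e_1^*)\tfrac{\nu}{\mu+\nu} - \log(1-e_1^*) - \log e_1^* - 2 \ge 0$; as $e_1^* \to 0^+$ the dominant term is $\log e_1^*$ with coefficient $\tfrac{\nu}{\mu+\nu}-1 = -\tfrac{\mu}{\mu+\nu} < 0$, so the left-hand side diverges to $+\infty$ and the condition is comfortably met in the small-error-rate regime relevant to QKD. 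I would additionally verify $e_1^* \in (0,1/2)$, which guarantees both that the tangent construction of Eq.~\eqref{Eq:expansion} is valid and that the monotonicity facts carried over from Theorem~\ref{thm:best} apply; this holds whenever the protocol yields a positive key rate. Once these admissibility checks are in place, the result is an exact evaluation of the linear relaxation rather than a further relaxation of it, which is precisely what makes the bound tight.
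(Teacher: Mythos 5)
Your proposal follows essentially the same route as the paper's proof: choose the tangent point $e_t = e_1^*$ from the simplified four-variable system Eq.~\eqref{eq:simple-setting}, set $a = 1+\log(1-e_1^*)$, $b = \log(1-e_1^*)-\log e_1^*$, and substitute into the linear relaxation Eq.~\eqref{Eq:linearelax}; your explicit algebra ($C_1 = \mu\nu(\mu-\nu)Y_1^*$, $C_2 = \mu\nu(\mu-\nu)Y_1^*e_1^*$, and $\nu e^{\mu}E_{\mu}Q_{\mu}-\mu e^{\nu}E_{\nu}Q_{\nu} = \tfrac{\mu\nu(\mu-\nu)}{2}Y_2^*e_2^*$) is exactly what the paper compresses into ``substituting these relations gives Eq.~\eqref{eq:keyrateboundasym}.'' Your treatment of the admissibility condition $(a-b)\tfrac{\nu}{\mu+\nu}+(b-2a)\ge 0$ is also consistent with the paper, which defers that check to Sec.~\ref{subsec:justification}.
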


\begin{proof}
	We determine the value of the tangent point $e_t$ and apply Eq.~\eqref{Eq:linearelax} to obtain this result. First, consider the scenario where $Y_0=Y_i=0$ for all $i \geq 3$. In this case, there are only four unknown variables: $Y_1, Y_2, e_1, e_2$. The constraint from Eq.~\eqref{Eq:linearConstr} can be reduced to Eq.~\eqref{eq:simple-setting} and therefore becomes solvable. Its solution is given by Eq.~\eqref{eq:starcase},
 	\begin{equation}\label{eq:starcase}
		\begin{split}
			Y_1^*&=\frac{1}{\mu\nu(\mu-\nu)}(\mu^2Q_{\nu}e^{\nu}-\nu^2Q_{\mu}e^{\mu}),\\
			e_1^*&=\frac{\mu^2E_{\nu}Q_{\nu}e^{\nu}-\nu^2E_{\mu}Q_{\mu}e^{\mu}}{\mu^2Q_{\nu}e^{\nu}-\nu^2Q_{\mu}e^{\mu}},\\
			Y_2^*&=\frac{2}{\mu\nu(\mu-\nu)}(\nu Q_{\mu}e^{\mu}-\mu Q_{\nu}e^{\nu}),\\
			e_2^*&=\frac{\nu E_{\mu}Q_{\mu}e^{\mu}-\mu E_{\nu}Q_{\nu}e^{\nu}}{\nu Q_{\mu}e^{\mu}-\mu Q_{\nu}e^{\nu}}.
		\end{split}
	\end{equation}   
     Then, we set $a=1+\log(1-e_1^*)$ and $b=\log(1-e_1^*)-\log e_1^*$, which yields $1-h(e_1^*)=a-be_1^*$. Substituting these relations into Eq.~\eqref{Eq:linearelax} gives Eq.~\eqref{eq:keyrateboundasym}.
\end{proof}

One important point to note is ensuring that the chosen values of $a$ and $b$ do not violate the condition $(a-b)\frac{\nu}{\mu+\nu}+(b-2a)>0$. As we will discuss in Sec.~\ref{subsec:justification}, this condition holds in most cases. In extreme cases where this condition is violated, we must alter our choice of $a$ and $b$. Typically, this can be achieved by selecting an $e_t$ smaller than $e_1^*$. Later, in Section \ref{Sec:Simulation}, numerical results will support our claim that the term $\frac{\nu}{2}(1+\log e_1^*)e_2^*Y_2^*$ is generally smaller in magnitude than $Y_1^*[1-h(e_1^*)]$ in most practical scenarios and can be considered a minor correction term.

\section{Statistical Fluctuation Analysis}\label{Sec:Joint}
Consider the definition of decoy-state statistics from Box \ref{box:decoy}. The number of states prepared are denoted by $N_{\mu}$ and $N_{\nu}$, which may deviate from their expected values $Np_{\mu}$ and $Np_{\nu}$, respectively. After state preparation, Alice sends these states through the channel. In the channel, these states are categorized according to their photon numbers, resulting in $N_0, N_1, \cdots$, whose actual values can differ from the expected numbers, given by $N_{\mu}\cdot e^{-\mu}\mu^i/i! + N_{\nu}\cdot e^{-\nu}\nu^i/i!$. 

The malicious Eve can observe these categories and manipulate the states accordingly. After her intervention, the number of valid detection events in each category can be predicted, denoted by $n_0, n_1, \cdots$. Bob, on receiving these states and performing measurements, records the clicks as $n_{\mu}, n_{\nu}$, which he will use to derive the keys. 

The transition from $n_0, n_1, n_2, \cdots$ to $n_{\mu}, n_{\nu}$ is also subject to fluctuations. This is because Eve cannot distinguish between states originating from different intensities, leading to the decomposition $n_i = n_{i, \mu} + n_{i, \nu}$ having numerous possible configurations. As a result, the total clicks seen by Bob, $n_{\mu} = n_{0, \mu} + n_{1, \mu} + \cdots$, can also vary across different possibilities. These intricacies significantly complicate the fluctuation analysis.

A previous study~\cite{Zhang2017improved} tackled this issue by assuming that $n_i$ is fixed but unknown, which greatly simplified the analysis. Building on this idea, we extend the fluctuation analysis by accounting for the correlations between clicks from different intensities, thereby establishing a more comprehensive framework. We apply this enhanced analysis to the results obtained in the previous section. Notations that will be widely used in this section are listed in Table~\ref{table:Notation}.

\begin{center}
	\begin{table}[htbp]
		\caption{Notations}\label{table:Notation}
		\centering
		\begin{tabular}{cc}\hline
			Quantity &  Notation\\\hline
			$\mu$ & Intensity for signal states\\\hline
			$\nu$ & Intensity for decoy states\\\hline
			$p_{\mu}$ & Probability of sending signal states\\\hline
			$p_{\nu}$ & Probability of sending decoy states\\\hline
			$N$ &  Total number of laser pulses emitted\\\hline
			$N_{\mu}$ & Total number of laser pulses emitted in signal states\\\hline
			$N_{\nu}$ & Total number of laser pulses emitted in decoy states\\\hline
            $N_{\mu,1}$ & Total single photon pulses emitted in decoy states\\\hline
			$n_{\mu}$ & Valid clicks in signal states\\\hline
            $n_{\mu,1}$ & Valid single photon clicks in signal states\\\hline
			$n_{\nu}$ & Valid clicks in decoy states\\\hline
			$m_{\mu}$ & Bit errors in signal states\\\hline
			$m_{\nu}$ & Bit errors in decoy states\\\hline
			$c_{\mu}$ & Valid clicks without bit errors in signal states\\\hline
			$c_{\nu}$ & Valid clicks without bit errors in decoy states\\\hline 
            $Y$ & Privacy amplification term $Y_1[1-h(e_1)]$\\\hline
            $R$ & Key rate per emitted pulse\\\hline
		\end{tabular}
	\end{table}
\end{center}

\subsection{Random variable formulation}
To understand this formulation, we first examine a scenario without fluctuations, where the relation between valid clicks sent from different sources, denoted by $n_{\mu}, n_{\nu}$, and valid clicks obtained from different photon-number channels, explicitly represented as $n_0, n_1, \cdots$, is as follows:
\begin{equation}\label{eq:noFluc}
	\begin{split}
		n_{\mu} &= p_{\mu|0}n_0 + p_{\mu|1}n_1 + p_{\mu|2}n_2 + \cdots,\\
		n_{\nu} &= p_{\nu|0}n_0 + p_{\nu|1}n_1 + p_{\nu|2}n_2 + \cdots,
	\end{split}
\end{equation}
where the conditional probability is given by
\begin{equation}
	p_{\mu|i} = \frac{p_{\mu} p_{\mu,i}}{p_{\mu} p_{\mu,i} + p_{\nu} p_{\nu,i}},
\end{equation}
with $p_{\mu,i}$ following the Poisson distribution:
\begin{equation}
	p_{\mu,i} = e^{-\mu} \frac{\mu^i}{i!}.
\end{equation}

In addition to $n_{\mu}, n_{\nu}$, the count of valid clicks with bit errors, $m_{\mu}, m_{\nu}$, can also be obtained from the experiment. Likewise, their connections with $m_i = n_i e_i$ are as follows:
\begin{equation}\label{eq:RnoFluc}
	\begin{split}
		m_{\mu} &= p_{\mu|0} m_0 + p_{\mu|1} m_1 + p_{\mu|2} m_2 + \cdots,\\
		m_{\nu} &= p_{\nu|0} m_0 + p_{\nu|1} m_1 + p_{\nu|2} m_2 + \cdots.
	\end{split}
\end{equation}

Note that all $n_0, m_0, n_1, m_1, \cdots$ are fixed but unknown. The fluctuation arises solely from the procedure of assigning $n_i, m_i$ to different intensities. We assume that Alice chooses the coherent states independently in each round. The core of the decoy-state method is that Eve has no information about the intensities chosen by Alice when controlling the values of $n_i, m_i$. Therefore, the assignment procedure, taking $n_i$ as an example, can be regarded as $n_i$ i.i.d.~random variables $\chi_{i,1}, \cdots, \chi_{i, n_i}$ with a probability of $p_{\mu|i}$ to contribute one click to $n_{\mu}$, and a probability of $p_{\nu|i}$ to contribute one click to $n_{\nu}$.

Mathematically, if we define $\chi_{i,j}$ in the following way:
\begin{equation}
	\chi_{i,j} = \begin{cases}
		\mu, & \text{with probability } p_{\mu|i},\\
		\nu, & \text{with probability } p_{\nu|i},
	\end{cases}
\end{equation}
the number of bit errors $m_{\mu}, m_{\nu}$ and clicks $n_{\mu}, n_{\nu}$ can be written as:
\begin{equation}\label{eq:rvVersion}
	\begin{split}
		m_{\mu} &= \sum_{j=1}^{m_0} 1_{\chi_{0,j} = \mu} + \sum_{j=1}^{m_1} 1_{\chi_{1,j} = \mu} + \sum_{j=1}^{m_2} 1_{\chi_{2,j} = \mu} + \cdots,\\
		m_{\nu} &= \sum_{j=1}^{m_0} 1_{\chi_{0,j} = \nu} + \sum_{j=1}^{m_1} 1_{\chi_{1,j} = \nu} + \sum_{j=1}^{m_2} 1_{\chi_{2,j} = \nu} + \cdots,\\
		n_{\mu} &= m_{\mu} + \sum_{j = m_0 + 1}^{n_0} 1_{\chi_{0,j} = \mu} + \sum_{j = m_1 + 1}^{n_1} 1_{\chi_{1,j} = \mu} + \sum_{j = m_2 + 1}^{n_2} 1_{\chi_{2,j} = \mu} + \cdots,\\
		n_{\nu} &= m_{\nu} + \sum_{j = m_0 + 1}^{n_0} 1_{\chi_{0,j} = \nu} + \sum_{j = m_1 + 1}^{n_1} 1_{\chi_{1,j} = \nu} + \sum_{j = m_2 + 1}^{n_2} 1_{\chi_{2,j} = \nu} + \cdots,
	\end{split}
\end{equation}
where $1_{(\cdot)}$ is the indicator function. In the absence of fluctuations, these equations revert to Eq.~\eqref{eq:noFluc}. The source of fluctuation lies in the Bernoulli-like random variables $\chi_{i,j}$, which can be tightly bounded using the Chernoff bounds. For later convenience, we define $c_{\mu} = n_{\mu} - m_{\mu}, c_{\nu} = n_{\nu} - m_{\nu}$, and $c_{i} = n_{i} - m_{i}$ to represent the number of clicks without errors.

\subsection{Constructing lower bound estimator}
In the last section, we derived a lower bound for the key rate in Eq.~\eqref{Eq:linearelax}. In this subsection, we employ the random variable formulation above to construct a lower bound for $Y_1[1-h(e_1)]$. Recall the definition of gain $Q$, the right-hand side of Eq.~\eqref{Eq:linearelax} can be rewritten as:
\begin{equation}\label{Eq:prac1}
	\begin{split}
		\hat{Y}=\frac{1}{\mu\nu(\mu-\nu)} \left[\mu^2 \left( a\frac{n_{\nu}}{N_{\nu}}e^{\nu} -b\frac{m_{\nu}}{N_{\nu}} e^{\nu}\right)-\nu^2 \left(a\frac{n_{\mu}}{N_{\mu}}e^{\mu} -b\frac{m_{\mu}}{N_{\mu}} e^{\mu}\right) + (a-b)\nu\left( \nu e^{\mu}\frac{m_{\mu}}{N_{\mu}}-\mu e^{\nu}\frac{m_{\nu}}{N_{\nu}}\right) \right].
	\end{split}
\end{equation}

We use Eq.~\eqref{eq:rvVersion} to substitute $n_{\mu}, n_{\nu}, m_{\mu}, m_{\nu}$ with random variables $\chi_{i,j}$ and reformulate this expression:
\begin{equation}
	\begin{split}
		\hat{Y} &= \frac{1}{\mu\nu(\mu-\nu)}\sum_{k=1}^{2}\sum_{i=0}^{\infty}\sum_{j=1}^{l_{k,i}} T_{k,ij}
	\end{split}
\end{equation}
with $l_{1,i}=m_i, l_{2,i}=c_{i}$ and:
\begin{equation} \label{Eq:XandY}
	\begin{split}
		T_{1,ij} &= (a-b)\frac{e^{\nu}}{N_{\nu}}\mu(\mu-\nu)\cdot 1_{\chi_{i,j}=\nu},\\
		T_{2,ij} &= -a\frac{\nu^2e^{\mu}}{N_{\mu}}\cdot 1_{\chi_{i,j+m_i}=\mu} + a\frac{\mu^2e^{\nu}}{N_{\nu}}\cdot 1_{\chi_{i,j+m_i}=\nu}.
	\end{split}
\end{equation}

We know the expectation $\mathbb{E}[\hat{Y}]$ serves as a strict lower bound for $Y_1[1-h(e_1)]$. However, it should be pointed out that the probability of this lower bound estimator $\hat{Y}$ being violated may be significant due to fluctuations. To derive a practically useful lower bound $Y^L$ from $\hat{Y}$, we first decompose each $T_k$ into $T_k = \Delta_k \cdot (Z_{k} + B_k)$, where $Z_{k} = \sum_{i=0}^{\infty}\sum_{j=1}^{l_{k,i}} Z_{k,ij}$ are sums of random variables taking values in $[0,1]$, and $\Delta_k, B_k$ are constants. At this stage, all fluctuations occur on variable $Z_{k,ij}$. Subsequently, according to whether $\Delta_k$ is positive or not, a factor $\frac{1}{1+\text{sgn}(\Delta_k)\cdot\delta_k}$ is multiplied to all $Z_{k,ij}$. The failure probability can then be bound as follows:
\begin{equation}
	\begin{split}
		\Pr\bigl(Y_1[1-h(e_1)] < Y^{L}\bigr) &\leq \Pr(\mathbb{E}[\hat{Y}] < Y^{L}) \\
		&= \Pr\left(\mathbb{E}\left[\sum_{k,i,j} \Delta_k \cdot Z_{k,ij}\right] \leq \sum_{k,i,j}\frac{\Delta_k \cdot Z_{k,ij}}{1+\text{sgn}(\Delta_k)\cdot \delta_k}\right) \\
		&\leq \sum_k \Pr\left(\text{sgn}(\Delta_k)\cdot\mathbb{E}\left[\sum_{i,j} Z_{k,ij}\right] \leq \frac{\text{sgn}(\Delta_k)}{1+\text{sgn}(\Delta_k)\delta_k} \sum_{i,j} Z_{k,ij}\right),
	\end{split}
\end{equation}
where the first line uses the fact that $\mathbb{E}[\hat{Y}]$ is a strict lower bound for $Y_1[1-h(e_1)]$. Therefore, we can see that the $\delta_k$ should be selected such that the relation 
\begin{equation}
\Pr\left(\text{sgn}(\Delta_k)\cdot\mathbb{E}\left[\sum_{i,j} Z_{k,ij}\right] \leq \frac{\text{sgn}(\Delta_k)}{1+\text{sgn}(\Delta_k)\delta_k} \sum_{i,j} Z_{k,ij}\right) \leq \varepsilon	
\end{equation}
holds. We refer to Sec.~\ref{subsec:Chernoff} for the detailed solution for $\delta_k$. The final reliable lower bound estimator is given by:
\begin{equation}
	Y^L = \frac{1}{\mu\nu(\mu-\nu)}\left(\sum_{k=1}^{2}\sum_{i=0}^{\infty}\sum_{j=1}^{l_{k,i}}\Delta_k \cdot \left(\frac{Z_{k,ij}}{1+\text{sgn}(\Delta_k)\delta_k} + B_k\right)\right).
\end{equation}
Note that the decomposition $T_k=\Delta_k\cdot(Z_k+B_k)$ is not unique. By defining
\begin{equation}
    \begin{split}
        \Delta_k'&=-\Delta_k\\
        B_k'&=-1-B_k\\
        Z_k'&=1-Z_k,
    \end{split}
\end{equation}
it follows that $T_k=\Delta_k'\cdot(Z_k'+B_k')$ is also a valid decomposition. In practical applications, we can choose the decomposition that yields the tightest bounds.

\subsection{Formula summary for experiment}\label{subsec:exp}
In this subsection, we give the detailed formula and derivation of the final lower bound estimator for key generation rate $R$ per emitted pulse. The result is summarized in the following theorem.

\begin{theorem}
	Suppose in a decoy-state QKD system, the experimental settings and results are shown in Table~\ref{table:Notation}. Then $R^L$ in the following formula serves as a lower bound for the key rate with a failure probability less than $3\varepsilon$:	
	\begin{equation}\label{Eq:KeyRatePrac}
		\begin{split}
			R^L &= \frac{\mu e^{-\mu}N_{\mu}}{\mu\nu(\mu-\nu)N(1+\delta_{N})}\Biggl( (a-b)\frac{e^{\nu}}{N_{\nu}}(\mu^2-\nu^2)\cdot\frac{m_{\nu}}{1-\delta_1} \\
			&\quad + a\Bigl(\frac{\mu^2e^{\nu}}{N_{\nu}} + \frac{\nu^2e^{\mu}}{N_{\mu}}\Bigr) \left( \frac{c_{\nu}}{1+\delta_2} - \frac{\nu^2e^{\mu}/N_{\mu}}{\nu^2e^{\mu}/N_{\mu} + \mu^2e^{\nu}/N_{\nu}}(c_{\mu}+c_{\nu}) \right) \Biggr) - I_{ec}\frac{n_{\mu}}{N},
		\end{split}
	\end{equation}
	where $a, b$ satisfy:	
	\begin{equation}
		\begin{split}
			a &= \log(1-e) + 1, \\
			b &= \log(1-e) - \log{e},
		\end{split}
	\end{equation}
	with:	
	\begin{equation}
		\begin{split}
			e &= \frac{\mu^2\frac{m_{\nu}}{Np_{\nu}}e^{\nu} - \nu^2\frac{m_{\mu}}{Np_{\mu}}e^{\mu}}{\mu^2\frac{n_{\nu}}{Np_{\nu}}e^{\nu} - \nu^2\frac{n_{\mu}}{Np_{\mu}}e^{\mu}},
		\end{split}
	\end{equation}
	and $\delta_N, \delta_1, \delta_2$ subject to:	
	\begin{equation}\label{Eq:delta_in_exp}
		\begin{split}
			\delta_N &= \frac{-\ln\varepsilon + \sqrt{-4N_{\mu}\mu e^{-\mu}\ln\varepsilon + \ln^2\varepsilon}}{2N_{\mu}\mu e^{-\mu}}, \\
			\delta_1 &= \frac{-3\ln{\varepsilon} + \sqrt{-8m_{\nu}\ln{\varepsilon} + \ln^2{\varepsilon}}}{2(m_{\nu} + \ln\varepsilon)}, \\
			\delta_2 &= \frac{-3\ln{\varepsilon} + \sqrt{-8c_{\nu}\ln{\varepsilon} + \ln^2{\varepsilon}}}{2(c_{\nu} + \ln\varepsilon)}.
		\end{split}
	\end{equation}
\end{theorem}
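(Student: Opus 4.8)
The plan is to read off $R^L$ as the finite-key realization of the key-rate identity \eqref{Eq:KeyRate}, written at the level of actual counts rather than probabilities. First I would express the number of secure bits as $RN = N_{\mu,1}\,Y_1[1-h(e_1)] - I_{ec}\,n_\mu$, where $N_{\mu,1}$ is the (random) number of single-photon signal pulses, $Y_1[1-h(e_1)]$ is the per-single-photon privacy-amplification term, and $n_\mu$ is the observed signal click count. Dividing by $N$ and using $p_\mu=N_\mu/N$ isolates three pieces to control: a reliable lower bound $Y^L$ on $Y_1[1-h(e_1)]$, a reliable lower bound on $N_{\mu,1}$, and the error-correction term $I_{ec}n_\mu/N$, which is already an exact count and needs no fluctuation treatment. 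Matching these pieces against \eqref{Eq:KeyRatePrac} then identifies the prefactor $\tfrac{\mu e^{-\mu}N_\mu}{N(1+\delta_N)}$ with the bounded single-photon-pulse count and the large parenthesis with $\mu\nu(\mu-\nu)\,Y^L$.

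To obtain $Y^L$, I would start from the asymptotic linear bound of Theorem~\ref{thm:simple}: its empirical form is precisely the estimator $\hat{Y}$ of \eqref{Eq:prac1}, with the tangent parameters $a,b$ set at the empirically estimated single-photon error rate $e$ (the finite-count version of $e_1^*$ in \eqref{eq:starcase}). Substituting the random-variable representation \eqref{eq:rvVersion} and collecting the coefficients in \eqref{Eq:XandY} is the crucial bookkeeping step: the $m_\mu$ terms cancel, the pairs $n_\mu-m_\mu$ and $n_\nu-m_\nu$ combine into $c_\mu,c_\nu$, and $\hat{Y}$ reduces to $\tfrac{1}{\mu\nu(\mu-\nu)}\bigl(\Delta_1 Z_1+\Delta_2(Z_2+B_2)\bigr)$ with two Bernoulli sums $Z_1=m_\nu$ and $Z_2=c_\nu$. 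Since $a-b=1+\log e<0$ while $a>0$ for $e<1/2$, the signs give $\Delta_1<0$ and $\Delta_2>0$, which is exactly what fixes the asymmetric multipliers $1/(1-\delta_1)$ and $1/(1+\delta_2)$. Applying the symmetric analytical Chernoff bound, Lemma~\ref{lemma:analytical}, with the observed samples $\phi=m_\nu$ and $\phi=c_\nu$ then yields $\delta_1,\delta_2$ of \eqref{Eq:delta_in_exp} and certifies $Y^L\le Y_1[1-h(e_1)]$ except with probability at most $2\varepsilon$.

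Next I would bound $N_{\mu,1}$. Because each signal pulse is single-photon independently with probability $\mu e^{-\mu}$, $N_{\mu,1}$ is a sum of Bernoulli variables with the \emph{known} mean $N_\mu\mu e^{-\mu}$; a one-sided Chernoff bound gives $N_{\mu,1}\ge N_\mu\mu e^{-\mu}/(1+\delta_N)$ except with probability $\varepsilon$, with $\delta_N$ as in \eqref{Eq:delta_in_exp}. Provided $Y^L\ge0$ (the regime in which a positive key is produced), the bounds chain: on the intersection of the three favorable events, $N_{\mu,1}\,Y_1[1-h(e_1)]\ge N_{\mu,1}\,Y^L\ge \tfrac{N_\mu\mu e^{-\mu}}{1+\delta_N}\,Y^L$, so that $R\ge R^L$. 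A union bound over the three failure events (the two for $Y^L$ and the one for $N_{\mu,1}$) yields the claimed total failure probability below $3\varepsilon$.

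The main obstacle is not the Chernoff bookkeeping but ensuring that $\mathbb{E}[\hat{Y}]$ is genuinely a lower bound for the true $Y_1[1-h(e_1)]$ once empirical data replace the asymptotic quantities. The linear relaxation \eqref{Eq:linearelax} is a valid lower bound only when the tangent slope and intercept satisfy $(a-b)\tfrac{\nu}{\mu+\nu}+(b-2a)\ge0$; since $a,b$ are computed from the fluctuating estimate $e$, I must verify this condition survives the finite-sample substitution (deferring to the discussion in Sec.~\ref{subsec:justification}, and falling back to a smaller tangent point when it fails). A related subtlety is that the whole argument is conditioned on the fixed-but-unknown photon-number counts $n_i,m_i$, so the only randomness used is in the assignment variables $\chi_{i,j}$; I would need to argue that the bound, once established for every fixed configuration of $n_i,m_i$, holds unconditionally, and that the strict inequality $\mathbb{E}[\hat{Y}]\le Y_1[1-h(e_1)]$ absorbs the remaining gap so that certifying $Y^L$ against $\mathbb{E}[\hat{Y}]$ suffices.
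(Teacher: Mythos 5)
Your proposal follows essentially the same route as the paper's own proof: the same starting decomposition $R \geq N_{\mu,1}^L\,Y^L/N - I_{ec}\,n_\mu/N$, the same Chernoff treatment of $N_{\mu,1}$ as a Bernoulli sum with known mean $N_\mu\mu e^{-\mu}$, the same reduction of $\hat{Y}$ (with $a,b$ tangent at the empirical $e$) to the two Bernoulli sums $Z_1=m_\nu$ and $Z_2=c_\nu$, with the signs $\Delta_1<0$, $\Delta_2>0$ dictating the asymmetric factors $1/(1-\delta_1)$ and $1/(1+\delta_2)$, followed by Lemma~\ref{lemma:analytical} and a union bound over the three failure events. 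Your coefficient bookkeeping (cancellation of the $m_\mu$ terms and collapse of $n-m$ pairs into $c_\mu,c_\nu$) reproduces exactly the paper's decomposition into $T_1$ and $T_2$, so the proposal is correct and matches the paper's argument.
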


\begin{proof}
    Expressed using variables defined in Table~\ref{table:Notation}, the lower bound of key generation rate per emitted pulse is,
    \begin{equation}
        \begin{split}
        R&=\frac{n_{\mu, 1}}{N}[1-h(e_1)]-fh(E_{\mu})\frac{n_{\mu}}{N},\\
        &\geq\frac{N_{\mu,1}^LY^L}{N}-fh(E_{\mu})\frac{n_{\mu}}{N},
        \end{split}
    \end{equation}
    where $N_{\mu,1}^L$ is the lower bound estimator of $N_{\mu,1}$ with failure probability less than $\varepsilon$ and $Y^L$ is the lower bound estimator for $Y_1[1-h(e_1)]$ with failure probability less than $2\varepsilon$. 

    We first discuss about $N_{\mu, 1}^L$. The random variable $N_{\mu, 1}$ follows a Bernoulli distribution whose expectation is $N_{\mu}\mu e^{-\mu}$. Therefore we have:	
	\begin{equation}
		\begin{split}
			N_{\mu, 1}^L = \frac{\mathbb{E}[N_{\mu, 1}]}{1+\delta_N} = \frac{N_{\mu}\mu e^{-\mu}}{1+\delta_{N}},
		\end{split}
	\end{equation}
	with $\delta_N$ being the solution to:
    \begin{equation}
        e^{-\frac{\delta^2}{2+\delta}\mathbb{E}[N_{\mu,1}]} = \varepsilon,
    \end{equation}
     which is exactly the expression in the first line of Eq.~\eqref{Eq:delta_in_exp}.

    As for $Y^L$, the general form is derived in last subsection, our task here is to figure out expressions for $\Delta_k, Z_{k}, B_k$ and compute $\delta_1,\delta_2$. As discussed in~\ref{subsec:improved}, we choose $a, b$ such that $a = \log(1-e) + 1, b = \log(1-e) - \log{e}$, where:	
	\begin{equation}
		e = \frac{\mu^2 E_{\nu} Q_{\nu} e^{\nu} - \nu^2 E_{\mu} Q_{\mu} e^{\mu}}{\mu^2 Q_{\nu} e^{\nu} - \nu^2 Q_{\mu} e^{\mu}}.
	\end{equation}
	
	The form of $T_{k}$ is already derived in Eq.~\eqref{Eq:XandY}. Now we rewrite them into $\Delta_k \cdot (Z_k + B_k)$ form. One possible decomposition is as follows:	
	\begin{equation}
		\begin{split}
			T_{1} &= (a-b)\frac{\mu e^{\nu}}{N_{\nu}}(\mu-\nu) \cdot\sum_{i=0}^{\infty}\sum_{j=1}^{m_i} 1_{\chi_{i,j}=\nu}, \\
			T_{2} &= a\Bigl(\frac{\mu^2e^{\nu}}{N_{\nu}} + \frac{\nu^2e^{\mu}}{N_{\mu}}\Bigr)\cdot\sum_{i=0}^{\infty}\sum_{j=1}^{c_i} \left(1_{\chi_{i,j}=\nu} - \frac{\nu^2e^{\mu}/N_{\mu}}{\nu^2e^{\mu}/N_{\mu} + \mu^2e^{\nu}/N_{\nu}}\right).
		\end{split}
	\end{equation}
	
	In such a case, we have $\Delta_1 = (a-b)\frac{\mu e^{\nu}}{N_{\nu}}(\mu-\nu), B_1 = 0$ and $\Delta_2 = a\Bigl(\frac{\mu^2e^{\nu}}{N_{\nu}} + \frac{\nu^2e^{\mu}}{N_{\mu}}\Bigr), B_2 = -\frac{\nu^2e^{\mu}/N_{\mu}}{\nu^2e^{\mu}/N_{\mu} + \mu^2e^{\nu}/N_{\nu}}$. Meanwhile, $Z_{1,ij}$ and $Z_{2,ij}$ are both $1_{\chi_{i,j}=\nu}$.  Recalling~\ref{subsec:Chernoff}, the $\delta_1, \delta_2$ can be computed as:	
	\begin{equation}
		\begin{split}
			\delta_1 &= \frac{-3\ln{\varepsilon} + \sqrt{-8m_{\nu}\ln{\varepsilon} + \ln^2{\varepsilon}}}{2(m_{\nu} + \ln\varepsilon)}, \\
			\delta_2 &= \frac{-3\ln{\varepsilon} + \sqrt{-8c_{\nu}\ln{\varepsilon} + \ln^2{\varepsilon}}}{2(c_{\nu} + \ln\varepsilon)}.
		\end{split}
	\end{equation}
	
	Utilizing the relation that $\sum_{i=0}^{\infty}\sum_{j=1}^{c_i}1 = c_{\mu} + c_{\nu}$ and $\sum_{i=0}^{\infty}\sum_{j=1}^{m_i}1 = m_{\mu} + m_{\nu}$, we combine the parameters above and replace $\sum_{i,j}Z_{k,ij}$ with their sample in the experiment, i.e., $m_{\nu}$ and $c_{\nu}$, to obtain the final formula.

\end{proof}

Note that this result is practically useful only when the data size is not too small. When $m_{\nu}$ is too small, $\delta_1$ may exceed 1, making the result invalid. Details for handling this extreme case has already been discussed at the end of~\ref{subsec:Chernoff}.

\section{Simulation} \label{Sec:Simulation}
In this section, we present simulation results to justify our linear expansion and demonstrate the advantages of our method. Here, we adopt the parameter settings shown in Table~\ref{Table:parameters}. The experimental statistics are calculated by the following formulas:
\begin{equation}
	\begin{split}
		Q_{\mu} &= 1 - (1-Y_0) e^{-\mu \eta}, \\
		Q_{\nu} &= 1 - (1-Y_0) e^{-\nu \eta}, \\
		E_{\mu} Q_{\mu} &= e_0 Y_0 + e_d (1 - e^{-\mu \eta}), \\
		E_{\nu} Q_{\nu} &= e_0 Y_0 + e_d (1 - e^{-\nu \eta}),
	\end{split}
\end{equation}
where $\eta$ is the transmission loss $\eta = 10^{-\alpha l/10}$ and $l$ is the transmission fiber length. Note that parameters such as $e_d, \eta_d, Y_0$ are used solely to compute the statistics $Q_{\mu}, Q_{\nu}, E_{\mu}, E_{\nu}$ for simulation purposes. They are not assumed to be known or directly used in the security analysis. The parameter $\delta$ to bound the fluctuation is evaluated according to Lemma~\ref{lemma:analytical}. Following the observation made therein, we employ binary search to compute $\delta$ when $\phi < -100\ln\varepsilon$.

\begin{table}[htbp]
	\centering \caption{Simulation parameters. The signal and decoy state parameters are set as described in~\cite{Liao2017satelite}, and the fiber and detector parameters follow from some of the most recent experiments~\cite{Ma2018phase,Zhu_2024}. Here, $\mu$ and $\nu$ represent the intensities for the signal and decoy states, respectively, with $p_{\mu}$ and $p_{\nu}$ being their corresponding probabilities. $\eta_d$ denotes the detector efficiency and $f_e$ represents the error correction efficiency. The parameter $e_d$ refers to the average bit error rate for nonzero photon components and $Y_0$ is the dark count rate. $\alpha$ is the optical fiber loss and $\varepsilon$ is the failure probability.} \label{Table:parameters}
	\begin{tabular}{p{0.05\textwidth}p{0.05\textwidth}p{0.05\textwidth}p{0.05\textwidth}p{0.05\textwidth}p{0.05\textwidth}p{0.06\textwidth}p{0.08\textwidth}p{0.1\textwidth}p{0.08\textwidth}}
		\hline
		\hline		
		$\mu$ & $\nu$ & $p_{\mu}$ & $p_{\nu}$ & $\eta_d$ & $f_e$ & $e_d$ & $Y_0$ & $\alpha$ & $\varepsilon$ \\
		\hline 
		0.6 & 0.2 & 6/7 & 1/7 & 72$\%$ & 1.06 & 1.5$\%$ & $3\times 10^{-8}$ & 0.21dB/km & $10^{-10}$\\
		\hline
		\hline
	\end{tabular}
\end{table}

\subsection{Justification of Linear Expansion}\label{subsec:justification}
Here, we discuss the key assumptions underlying our linear expansion method. During the derivation, we relied on the following condition:
\begin{equation}\label{Eq:assumption}
	(a-b)\frac{\nu}{\mu+\nu}+(b-2a)\geq 0,
\end{equation}
where $a = 1 + \log(1 - e_t)$ and $b = \log(1 - e_t) - \log{e_t}$. It is evident that when $e_t$ approaches zero, this condition holds for any values of $\mu$ and $\nu$. However, choosing a very small $e_t$ can reduce the effectiveness of our method. The key question is: how large can $e_t$ be while still satisfying this condition?

Figure~\ref{fig:just} offers insights into this. In the extreme case where $\nu = \mu$, $e_t$ can reach values as high as 0.2. In more typical decoy-state setups, where $\nu/\mu$ is often less than $1/2$, $e_t$ can exceed 0.3, which corresponds to a relatively high error rate. Ideally, $e_t$ should be selected to be close to $e_1$. However, when $e_1$ is large, no secure keys can be generated, making this region irrelevant in practical settings. In scenarios where secure keys are feasible, our assumption consistently holds, allowing us to confidently use the linear bound derived in Eq.~\eqref{Eq:linearelax}.

\begin{figure}[htbp!]
	\includegraphics[width=0.35\linewidth]{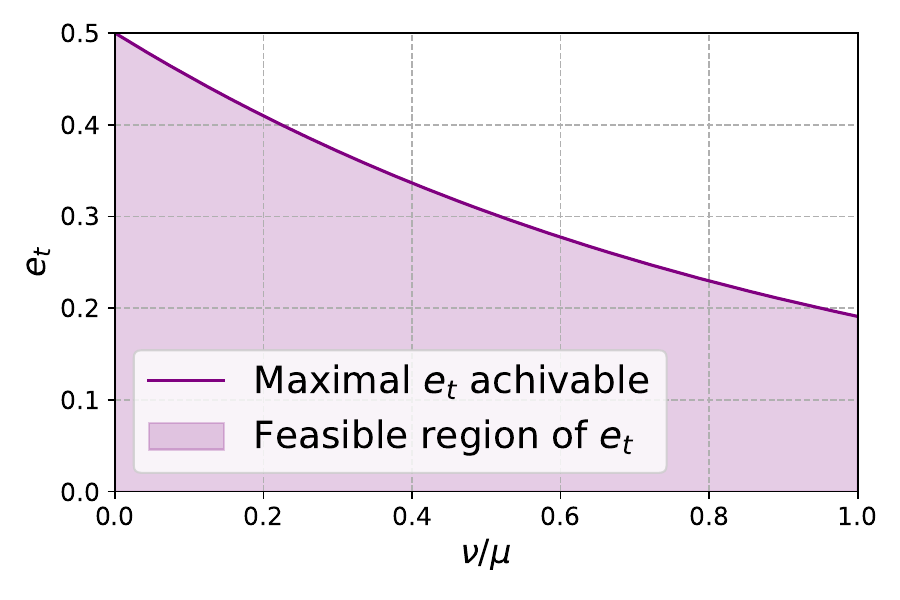}
	\caption{Feasibility of $e_t$ as a function of the intensity ratio between the signal state $\mu$ and the decoy state $\nu$. The feasible values of $e_t$ are those that do not violate the assumption stated in Eq.~\eqref{Eq:assumption}. This relationship helps to identify suitable values for $e_t$ in different experimental setups.} \label{fig:just}
\end{figure}

To be more specific, we recommend using the following expression for $e_t$:
\begin{equation}
	e_t = \frac{\mu^2 E_{\nu} Q_{\nu} e^{\nu} - \nu^2 E_{\mu} Q_{\mu} e^{\mu}}{\mu^2 Q_{\nu} e^{\nu} - \nu^2 Q_{\mu} e^{\mu}},
\end{equation}
as given in Eq.~\eqref{eq:starcase}. Figure~\ref{fig:error-count} illustrates the relationship between $e_t$ and the transmission distance. The figure shows that whenever a secure key can be generated, the condition in Eq.~\eqref{Eq:assumption} is satisfied, thereby justifying our use of the linear expansion.

\begin{figure}
	\centering 
	\includegraphics[width=0.35\linewidth]{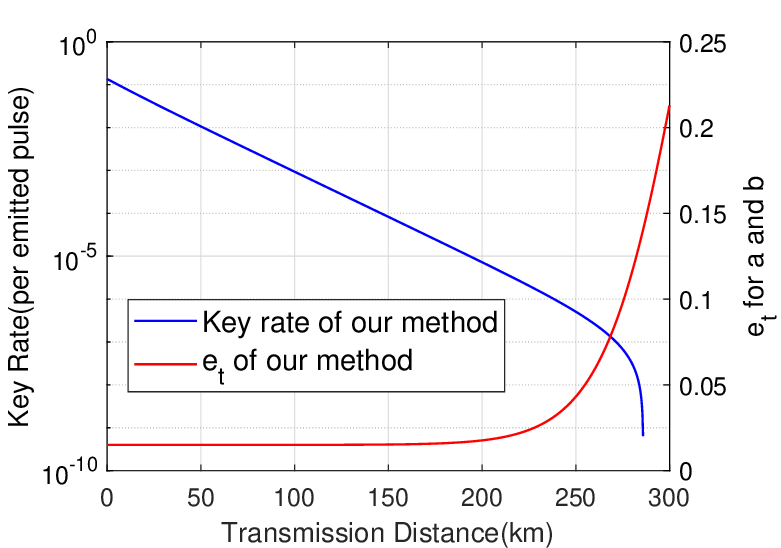}
	\caption{The relationship between $e_t$ and the key rate as a function of transmission distance when $\nu/\mu=1/3$. When secure keys are feasible, the corresponding $e_t$ is smaller than $0.15$, which would not violate the condition in Eq.~\eqref{Eq:assumption}.}
	\label{fig:error-count}
\end{figure}

Recall that in Theorem~\ref{thm:simple}, we bound $Y_1[1-h(e_1)]$ by $Y_1^*[1-h(e_1^*)] + \frac{\nu}{2}(1+\log{e_1^*})e_2^*Y_2^*$. Figure~\ref{fig:rate_thr} provides further numerical results to support our claim that the term $\frac{\nu}{2}(1+\log e_1^*)e_2^*Y_2^*$ is significantly smaller than $Y_1^*[1-h(e_1^*)]$ in most practical settings. 

\begin{figure}[htbp!]
	\includegraphics[width=0.35\linewidth]{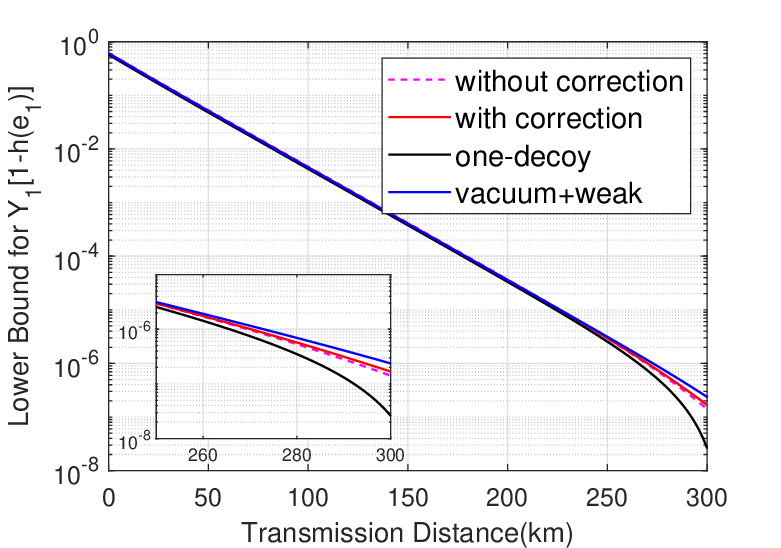}
	\caption{The magenta dashed line represents $Y_1^*[1-h(e_1^*)]$. The red line shows the value after applying the correction term. The black line corresponds to the standard one-decoy method, while the blue line shows the result from the vacuum+weak decoy method.} \label{fig:rate_thr}
\end{figure}

Notably, the correction term is negative at shorter transmission distances but becomes positive as the distance increases. This transition is reasonable because, at longer distances, $Y_0$ contributes substantially to both valid clicks and bit errors, making it impossible for $Y_0$ to be zero, given the values of $Q_{\mu}, Q_{\nu}, E_{\mu},$ and $E_{\nu}$. The initial derivation of $Y_1[1-h(e_1)]$ assumed $Y_0 = 0$, which is not physically plausible under these conditions, and the correction term effectively compensates for this assumption. Additionally, the positivity of this correction term provides an intuitive threshold for assessing the significance of $Y_0$. Specifically, when $\nu E_{\mu} Q_{\mu} e^{\mu} - \mu E_{\nu} Q_{\nu} e^{\nu} \leq 0$, $Y_0$ has minimal impact on the final key rate. However, as this determinant approaches positive values, the role of $Y_0$ becomes increasingly important.

\subsection{Key Rate Simulation}
Here, we evaluate the effectiveness of our enhanced statistical fluctuation analysis through a numerical comparison with existing Chernoff+Hoeffding methods. We first demonstrate the theoretical improvement provided by our linear relaxation approach and then evaluate the secret key generation rate and the maximum secure transmission distance achievable in a finite-key scenario. These comparative analyses highlight the practical benefits and improvements our method offers over traditional approaches in decoy-state QKD.

\begin{figure}[htbp]
	\centering
	\includegraphics[width=0.35\linewidth]{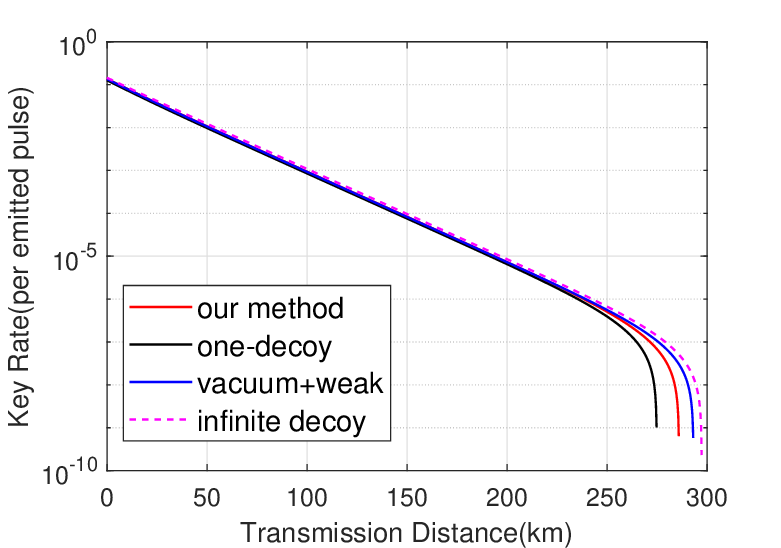}
	\caption{Comparison of key generation rates under fluctuation-free conditions.}
	\label{fig:thr_improve}
\end{figure}

We first illustrate the improvement in key generation rates under fluctuation-free conditions, as shown in Figure~\ref{fig:thr_improve}. The dashed magenta line represents the $Y_1[1-h(e_1)]$ result of the infinite decoy state case, calculated using:
\begin{equation}
	\begin{split}
		Y_1 &= Y_0 + (1 - Y_0)(1 - \eta), \\
		e_1 Y_1 &= e_0 Y_0 + e_d (1 - Y_0)(1 - \eta).
	\end{split}
\end{equation}
We observe that the vacuum + weak decoy state method theoretically achieves the highest key generation rate, while the one-decoy state method performs the worst due to insufficient information about $Y_0$. Our method, although not relying on vacuum states to estimate $Y_0$, shows comparable performance to the vacuum + weak decoy method across a considerable region.

\begin{figure}[htbp]
	\centering
	\includegraphics[width=0.35\linewidth]{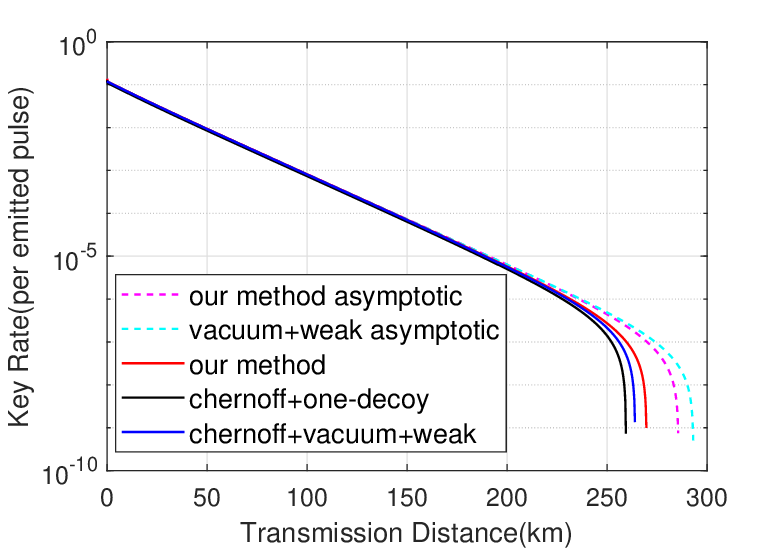}
	\caption{Comparison of key generation rates with a data size of $N = 10^{11}$.}
	\label{fig:fluctuation_rate}
\end{figure}

The situation changes significantly when statistical fluctuations are taken into account. Estimating $Y_0$ introduces considerable challenges for the vacuum + weak decoy state method when the data size is limited, supporting our choice of the one-decoy protocol. Figure~\ref{fig:fluctuation_rate} shows the key generation rate for $N = 10^{11}$. We use the original probability settings of $6:1:1$ for the signal state, weak decoy, and vacuum decoy states, as described in~\cite{Liao2017satelite}. Although the vacuum + weak decoy state method theoretically outperforms our method, this advantage diminishes under finite data conditions due to inaccuracies in estimating $Y_0$. Since $Y_0$ is on the order of $10^{-8}$, its fluctuation can become comparable to or even larger than theoretical bounds derived by the one-decoy method when $N$ is not large enough. At shorter transmission distances, relative fluctuations are smaller, allowing our method to achieve a higher key generation rate compared to the one-decoy method, consistent with the theoretical results, while the vacuum + weak decoy method also performs well. As the distance approaches the maximum transmission range, our fluctuation analysis demonstrates robustness over the other two methods. At $250$ km, our method's key generation rate is $2.31$ times that of the one-decoy method and $1.46$ times that of the vacuum + weak decoy method.

Finally, we compare the maximum transmission distance achievable with different data sizes. The results are shown in Figure~\ref{fig:distance}. For small $N$, the vacuum + weak decoy method effectively degenerates to the one-decoy method because the fluctuation in $Y_0$ becomes unreliably large, requiring us to use the bound from the one-decoy method. Both methods converge to their respective maximum distances when $N$ reaches approximately $10^{14}$. This is understandable since both use the Chernoff bound as the basis for fluctuation analysis. However, our method shows a better maximum distance at smaller data sizes. When $N = 10^{11}$, our method achieves a $10$ km improvement over the standard one-decoy method and a $6$ km improvement over the vacuum + weak decoy method.

\begin{figure}[htbp]
	\centering
	\includegraphics[width=0.35\linewidth]{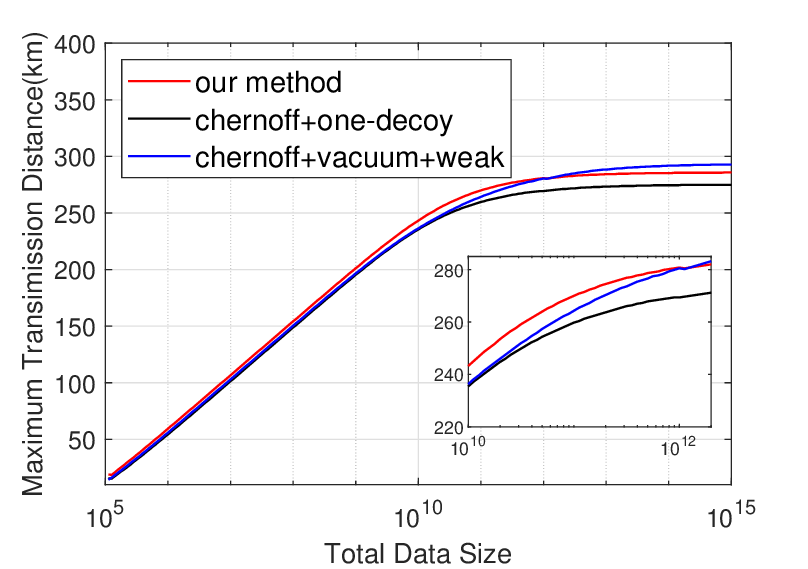}
	\caption{Comparison of maximum transmission distances.}
	\label{fig:distance}
\end{figure}

\section{Conclusion and Outlook} \label{sc:conclusion}
In this paper, we present an enhanced bound estimation technique and a refined fluctuation analysis to improve the finite-data performance of decoy-state QKD systems. Unlike previous approaches, we employ a linear expansion to simplify the Shannon entropy function and introduce a joint statistical fluctuation analysis framework to effectively handle the resulting linear relationships. The slope and intercept parameters of the linear expansion, similar to the state intensities and probabilities, can be optimized to achieve a higher key generation rate.

We focus primarily on the one-decoy state method due to its simplicity and experimental feasibility. However, our linear expansion and fluctuation analysis techniques are easily extendable to other decoy-state methods and QKD schemes, such as measurement-device-independent QKD or mode-pairing QKD. In this work, we assume a symmetric basis setting and do not account for different yields and error rates across various bases. Extending our analysis to cover asymmetric QKD protocols remains a direction for future exploration.

Furthermore, the fluctuation analysis framework introduced here can be applied to other quantum information post-processing tasks, provided the final outcome maintains a linear relationship with experimental statistics.

\begin{acknowledgements}
We extend our gratitude to Zhenyu Du and Yuwei Zhu for their valuable discussions and insights. This work was supported by the National Natural Science Foundation of China Grant No.~12174216 and the Innovation Program for Quantum Science and Technology Grants No.~2021ZD0300804 and No.~2021ZD0300702.
\end{acknowledgements}

\bibliography{bibDecoy.bib}
\end{document}